\newcommand{\polylog}{\mathrm{PolyLogTime}}
\newcommand{\npolylog}{\mathrm{NPolyLogTime}}
\title{The Polylog-Time Hierarchy Captured by Restricted Second-Order Logic}
\author{Flavio Ferrarotti}{Software Competence Center Hagenberg, {Hagenberg, Austria}}{flavio.ferrarotti@scch.at}{https://orcid.org/0000-0003-2278-8233}{}
\author{Sen\'{e}n Gonz\'{a}lez}{Software Competence Center Hagenberg, {Hagenberg, Austria}}{senen.gonzalez@scch.at}{}{}
\author{Klaus-Dieter Schewe}{Christian-Doppler Laboratory for Client-Centric Cloud Computing, {Linz, Austria}}{kdschewe@acm.org}{}{}
\author{Jos\'{e} Mar\'{i}a Turull-Torres}{Universidad Nacional de La Matanza, {Buenos Aires, Argentina}}{jturull@unlam.edu.ar}{}{}
\authorrunning{F. Ferrarotti, S. Gonz\'{a}les, K.-D. Schewe and J. M. Turull-Torres}
\subjclass{Theory of computation $\rightarrow$ Finite Model Theory}
\keywords{Expressive power, Second order logic, Descriptive complexity}
\begin{document}

\maketitle

\begin{abstract}
Let $\mathrm{SO}^{\mathit{plog}}$ denote the restriction of second-order logic, where second-order quantification ranges over relations of size at most poly-logarithmic in the size of the structure. In this article we investigate the problem, which Turing machine complexity class is captured by Boolean queries over ordered relational structures that can be expressed in this logic. For this we define a hierarchy of fragments $\Sigma^{\mathit{plog}}_m$ (and $\Pi^{\mathit{plog}}_m$) defined by formulae with alternating blocks of existential and universal second-order quantifiers in quantifier-prenex normal form. We first show that the existential fragment $\Sigma^{\mathit{plog}}_1$ captures $\npolylog$, i.e. the class of Boolean queries that can be accepted by a non-deterministic Turing machine with random access to the input in time $O((\log n)^k)$ for some $k \ge 0$. Using alternating Turing machines with random access input allows us to characterise also the fragments $\Sigma^{\mathit{plog}}_m$ (and $\Pi^{\mathit{plog}}_m$) as those Boolean queries with at most $m$ alternating blocks of second-order quantifiers that are accepted by an alternating Turing machine. Consequently, $\mathrm{SO}^{\mathit{plog}}$ captures the whole poly-logarithmic time hierarchy.  We demonstrate the relevance of this logic and complexity class by several problems in database theory.
 \end{abstract}
 
\section{Introduction}

According to Immerman, the credo of descriptive complexity theory is that ``the computational complexity of all problems in Computer Science can be understood via the complexity of their logical descriptions'' \cite[p.5]{Immerman99}. Starting from Fagin's fundamental result \cite{fagin:1973} that the existential fragment $\mathrm{SO}\exists$ of second-order logic over finite relational structures captures all decision problems that are accepted by a non-deterministic Turing machine in polynomial time---in other words: $\mathrm{SO}\exists$ captures the complexity class $\mathrm{NP}$---many more connections between logics and Turing complexity classes have been discovered (see e.g. the monographs by Immerman \cite{Immerman99} and Libkin \cite{Libkin04} or the collection \cite{graedel:eatcs2007}). The polynomial time hierarchy is captured by full second-order logic $\mathrm{SO}$ over finite relational structures \cite{Stockmeyer76}, but it is unknown, whether there exists a logic capturing the complexity class $\mathrm{P}$.

In order to understand the gap between $\mathrm{P}$ and $\mathrm{NP}$ a lot of research has been dedicated to extensions of first-order logic. For instance, adding transitive closure to first-order logic defines the logic $\mathrm{FO}[\mathrm{TC}]$, which captures NLOGSPACE \cite[p.150]{Immerman99}. Blass, Gurevich and Shelah investigate the choiceless fragment of $\mathrm{P}$ \cite{blass:apal1999} using Abstract State Machines \cite{boerger:2003}. They present a logic that expresses all properties expressible in any other $\mathrm{P}$ logic in the literature, but does does not capture all of $\mathrm{P}$.

The project {\em Higher-Order Logics and Structures} is dedicated to a somehow inverse approach, the investigation of semantically restricted higher-order logics over finite structures and their relationship to Turing complexity. The logic $\mathrm{SO}^\omega$ introduced by Dawar in \cite{Dawar98} and the related logic $\mathrm{SO}^F$ introduced in \cite{GrossoT10}, respectively, provide the main background for the theoretical line of work in this direction. Both logics restrict the interpretation of second-order quantifiers to relations closed under equivalence of types of the tuples in the given relational structure. Through the study of different semantic restrictions over the existential second-order logic with second-order quantification restricted to binary relations, many interesting results regarding the properties of the class of problems expressible in this logic (known as binary NP) were established \cite{DurandLS98}. Another relevant example of a semantic restriction over existential second-order logic can be found in \cite{LautemannST94}. 

The expressive power of higher-order logics (beyond second-order) on finite structures has been studied, among a few others, by Kuper and Vardi \cite{KuperV88}, by Leivant \cite{Leivant89} and by Hull and Su \cite{HullS91}. However, the \emph{exact} characterization of each prenex fragment of higher-order logics (in terms of oracle machines) over finite structures is more recent and it is due to Hella and Turull Torres \cite{HellaT03,HellaT06}. Independently, Kolodziejczyk \cite{Kol04,Kol05} characterized the prenex fragments of higher-order logic in terms of alternating Turing machines, but taking also into account the arity of the higher-order variables.
Starting from studies about the expressiveness of restricted higher-order logics in \cite{FerrarottiRT13} and fragments on higher-order logics that collapse to second-order \cite{ferrarotti:corr2016,ferrarotti:wollic2017}, all of which defining complexity classes that include $\mathrm{NP}$, the question comes up, which restrictions to $\mathrm{SO}$ give rise to meaningful complexity classes.

\subsection{Main Contributions}

In this paper we concentrate on complexity classes inside POLYLOG-SPACE. Analogous to the polynomial time hierarchy inside PSPACE we investigate a poly-logarithmic time hierarchy $\mathrm{PLH}$, where $\tilde{\Sigma}_1^{\mathit{plog}}$ is defined by $\npolylog$ capturing all decision problems that can be accepted by a non-deterministic Turing machine in time $O((\log n)^k)$ for some $k \ge 0$, where $n$ is the size of the input. In order to be able to deal with the sublinear time constraint random access to the input is assumed. Higher complexity classes $\tilde{\Sigma}_m^{\mathit{plog}}$ (and $\tilde{\Pi}_m^{\mathit{plog}}$) in the hierarchy are defined analogously using alternating Turing machines with a bound $m$ on the alternations.

In the same spirit we define the logic $\mathrm{SO}^{\mathit{plog}}$, which denotes the restriction of second-order logic, where second-order quantification ranges over relations of size at most poly-logarithmic in the size of the structure. A hierarchy of fragments $\Sigma^{\mathit{plog}}_m$ (and $\Pi^{\mathit{plog}}_m$) is then defined by formulae with alternating blocks of existential and universal second-order quantifiers in quantifier-prenex normal form. We first show that the existential fragment $\Sigma^{\mathit{plog}}_1$ captures $\npolylog = \tilde{\Sigma}^{\mathit{plog}}_1$, i.e. the class of Boolean queries that can be accepted by a non-deterministic Turing machine with random access to the input in time $O((\log n)^k)$ for some $k \ge 0$. Using alternating Turing machines with random access input allows us to characterise also the fragments $\Sigma^{\mathit{plog}}_m$ (and $\Pi^{\mathit{plog}}_m$) as those Boolean queries with at most $m$ alternating blocks of second-order quantifiers that are accepted by an alternating Turing machine. That is, we obtain $\Sigma^{\mathit{plog}}_m = \tilde{\Sigma}^{\mathit{plog}}_m$ (and $\Pi^{\mathit{plog}}_m = \tilde{\Pi}^{\mathit{plog}}_m$). Consequently, $\mathrm{SO}^{\mathit{plog}}$ captures the whole poly-logarithmic time hierarchy $\mathrm{PLH}$.
 
\subsection{Related Work}

The logic $\mathrm{SO}^{\mathit{plog}}$ is similar to the restricted second-order logic (let us denote it as $\mathrm{SO}^r$) defined by David A. Mix Barrington in~\cite{Barr92}. He uses $\mathrm{SO}^r$ to characterize a class of families of circuits $\mathit{qAC}^0$, showing\footnote{The result in \cite{Barr92} is actually more general, allowing any set of Boolean functions  $\mathcal{F}$ of $n^{{O}(1)}$ inputs complying with a padding property and containing the functions $\mathrm{OR}$ and $\mathrm{AND}$. The restricted second-order logic is defined by extending first-order logic with a second-order quantifier $Q_{f}$ for each $f \in \mathcal{F}$ which range over relations on the sub-domain $\{1, \ldots, \log n\}$, where $n$ is the size of the interpreting structure. The case related to our result is when $\mathcal{F} = \{\mathrm{OR}, \mathrm{AND}\}$, which gives raise to restricted existential and universal second-order quantifiers.} that the class of Boolean queries computable by $\mathrm{DTIME}[(\log n)^{{O}(1)}]$ $\mathrm{DCL}$-uniform families of Boolean circuits of unbounded fan-in, size $2^{({\log n})^{{O}(1)}}$ and depth ${O}(1)$, coincides with the class of Boolean queries expressible in $\mathrm{SO}^{r}$. 

There is a well known result (\cite{Immerman99}, Theorem~5.22) which shows that the class of first-order uniform families of Boolean circuits of unbounded fan-in, size $n^{{O}(1)}$ and depth ${O}(1)$, coincides with the class of languages $\mathrm{ATIME}[\log n, {O}(1)]$ that are accepted by random-access alternating Turing machines that make at most $\log n$ steps and at most ${O}(1)$ alternations between existential and universal states. The intuitive idea is that as alternating Turing machines have bounded fan-out in their computation trees, to implement an AND (OR) gate of unbounded fan-in, a full balanced tree of depth logarithmic in the size of the circuits, of universal (existential) states is needed. Then it appears as natural that $qAC^0$ coincides with the whole poly-logarithmic time hierarchy $\mathrm{PLH}$ as defined in this paper, since $\mathrm{PLH} = \mathrm{ATIME}[(\log n)^{{O}(1)}, {O}(1)]$ and $(\log n)^{{O}(1)}$ is the logarithm of the size $2^{(\log n)^{{O}(1)}}$ of the circuits in $qAC^0$.   

Then the fact that $\mathrm{SO}^{\mathit{plog}}$ captures the whole class $\mathrm{PLH}$ could also be seen as a corollary of Barrington's theorem in~\cite{Barr92} (see Section~3, page~89). This however does \emph{not} applies to our main results, i.e., the capture of $\npolylog$ by the existential fragment of $\mathrm{SO}^{\mathit{plog}}$ and the one-to-one correspondence between the prenex fragments of $\mathrm{SO}^{\mathit{plog}}$ and the corresponding levels of $\mathrm{PLH}$. The critical difference between Barrington's $\mathrm{SO}^r$ logic and $\mathrm{SO}^{\mathit{plog}}$ is that we impose a restriction in the first-order logic sub-formulae, so that the universal first-order quantifier is only allowed to range over sub-domains of polylog size. This is a key feature since otherwise the fist-order sub-formulae of the $\Sigma^{\mathit{plog}}_m$ (and $\Pi^{\mathit{plog}}_m$) fragments of $\mathrm{SO}^{\mathit{plog}}$ would need at least linear time to be evaluated. Of course, Barrington does not need to define such constraint because he always speaks of the whole class $\mathrm{PLH}$, and we show indeed that for every first-order logic formula there is an equivalent $\mathrm{SO}^{\mathit{plog}}$ formula. 

It is \emph{not} a trivial task to establish an exact correspondence between the levels $\tilde{\Sigma}_m^{\mathit{plog}}$ of the polylog-time hierarchy PLH and sub-classes of families of circuits in $\mathit{qAC}^0$. On this regard, we have explored the relationship with the sub-classes $\mathit{qAC}^0_m$ of families of circuits in $\mathit{qAC}^0$ where the path from an input gate to the output gate with the maximum number of alternated gates of unbounded fan-in of type AND and OR in the circuits is $m$. Let us denote as $\exists qAC^0_{m}$ ($\forall qAC^0_{m}$) the sub-class of $qAC^0_{m}$ where the output gate is of type $\mathrm{OR}$ ($\mathrm{AND}$). We were able to show that:
\begin{enumerate}[i.]
\item $\exists qAC^0_{m}$ $\subseteq$ $\tilde{\Sigma}^{\mathit{plog}}_m$, and $\forall qAC^0_{m}$ $\subseteq$ $\tilde{\Sigma}^{\mathit{plog}}_{m+1}$ for all $m \geq 1$.
\item If $t,k \ge 1$, $\psi \in {\Sigma}^{1,\mathit{plog}}_t$, the first-order sub-formula $\varphi$ of $\psi$ belongs to ${\Sigma}^{0}_k$ and the vocabulary of $\psi$ includes the $\mathrm{BIT}$ predicate, then there is a family $\mathcal{C}_{\psi}$ of Boolean circuits in $\exists qAC^0_{t+k}$ that computes the Boolean query expressed by $\psi$.
\end{enumerate}
Sketches of the proofs of~(i) and~(ii) are included in Appendix~\ref{hirearchy_in_aAC0} (see Lemmas~\ref{a} and~\ref{b}, respectively). Whether the converse of these results hold or not is still open. 

\subsection{Organization}

We reach our results by following an inductive itinerary. After presenting some preliminaries in Section \ref{sec:preliminaries}, we introduce the logic $\mathrm{SO}^{\mathit{plog}}$ in Section~\ref{sec:soplog}. We do this in a comprehensive way, following our research program in the line of restricting sizes or other properties of valuating relations in higher-order logics. We give examples of natural queries expressible in $\mathrm{SO}^{\mathit{plog}}$, such as the classes $\mathrm{DNFSAT}$ of satisfiable propositional formulas in disjunctive normal form and $\mathrm{CNFTAUT}$ of propositional tautologies in conjunctive normal form, both defined in as early as 1971 (\cite{Cook_71}). The expression of such queries in $\mathrm{SO}^{\mathit{plog}}$ can in general be done by means of relatively simple and elegant formulae, despite the restriction we impose in the universal first-order quantification. This is not fortuitous, but the consequence of using in the definition of $\mathrm{SO}^{\mathit{plog}}$ a more relaxed notion of second-order quantification than that used by Barrington in the definition of $\mathrm{SO}^r$. Indeed, the second-order quantifiers in $\mathrm{SO}^{\mathit{plog}}$ range over arbitrary relations of polylog size on the number of elements of the domain, not just over relations defined on the set formed by the first $\log n$ elements of that domain as in $\mathrm{SO}^r$. The descriptive complexity of $\mathrm{SO}^{\mathit{plog}}$ is not increased by this more liberal definition of polylog restricted second-order quantifiers. The fragments $\Sigma^{\mathit{plog}}_m$ and $\Pi^{\mathit{plog}}_m$ of formulae in quantifier-prenex normal form are defined using the classical approach in second-order logic, showing that every $\mathrm{SO}^{\mathit{plog}}$ formula can be written in this normal form. This forms the basis for the definition of the hierarchy inside $\mathrm{SO}^{\mathit{plog}}$. 

Section~\ref{sec:arithexampls} shows how the logic captures the poly-logarithmically bounded binary arithmetics necessary to prove our main results.

In Section~\ref{sec:plh} we define a non-deterministic Turing machine inspired in the random access deterministic and alternating Turing machines used in \cite{barrington:jcss1990}, as well as the corresponding complexity class $\mathrm{NPolyLogTime}$ and the levels of the implied polylog-time hierarchy $\mathrm{PLH}$ inside POLYLOG-SPACE as indicated above. 

Section \ref{sec:main} contains our main results. First we give a detailed, constructive proof of the fact that the existential fragment of $\mathrm{SO}^{\mathit{plog}}$, i.e. $\Sigma^{\mathit{plog}}_1$, captures the complexity class $\npolylog$. This part is the most challenging since, as pointed out earlier, it requires the use of a restricted form of first-order universal quantification. 
After that, we follow the inductive path and prove the expressive power of the fragments $\Sigma^{\mathit{plog}}_m$ and $\Pi^{\mathit{plog}}_m$, for every $m \geq 1$, and prove that each layer is characterized by a random-access alternating Turing machine with polylog time and $m$ alternations. The fact that $\mathrm{PLH} = \mathrm{ATIME}[(\log n)^{{O}(1)}, {O}(1)] = \mathrm{SO}^{\mathit{plog}}$ follows as a simple corollary.

We conclude the paper with a brief summary and outlook in Section \ref{sec:schluss}.

\section{Preliminaries}\label{sec:preliminaries}

Unless otherwise stated, we work with ordered finite structures and assume that all vocabularies include the relation and constant symbols: $\leq$, $\mathrm{SUCC}$, $\mathrm{BIT}$, $0$, $1$, $\mathit{logn}$ and $\mathit{max}$. In every structure $\bf A$, $\leq$ is interpreted as a total ordering of the domain $A$ and $\mathrm{SUCC}$ is interpreted by the successor relation corresponding to the $\leq^{\bf A}$ ordering. The constant symbols $0$, $1$ and $\mathit{max}$ are in turn interpreted as the minimum, second and maximum elements under the $\leq^{\bf A}$ ordering and the constant $\mathit{logn}$ as $\left\lceil \log_2 |A| \right\rceil$. By passing to an isomorphic copy, we assume that $A$ is the set $\{0, 1, \ldots, n-1\}$ of natural numbers less than $n$, where $n$ is the cardinality $|A|$ of $A$. Then $\mathrm{BIT}$ is interpreted by the following binary relation:
\[\mathrm{BIT}^{\bf A} = \{(i, j) \in A^2 \mid \text{Bit $j$ in the binary representation of $i$ is $1$}\}.\]
We assume that all structures have at least \emph{two} elements. This results in a cleaner presentation, avoiding the trivial case of structures with only one element which would satisfy $0 = 1$. In this paper, $\log n$ always refers to the binary logarithm of $n$, i.e. $\log_2 n$. We write $\log^k n$ as a shorthand for $(\left\lceil\log n \right\rceil)^k$ and finally $\mathit{\log n-1}$ as $z$ such as $SUCC(z,\mathit{logn})$. 
 
\section{\texorpdfstring{$\mathrm{SO}^{\mathit{plog}}$}{TEXT}: A Restricted Second-Order Logic}{\label{sec:soplog}}

We define $\mathrm{SO}^{\mathit{plog}}$ as the restricted second-order logic obtained by extending \emph{existential} first-order logic with (1) universal and existential second-order quantifiers that are restricted to range over relations of poly-logarithmic size in the size of the structure, and (2) universal first-order quantifiers that are restricted to range over the tuples of such poly-logarithmic size relations. 

\begin{definition}[Syntax of $\mathrm{SO}^{\mathit{plog}}$]
firstFor every $r{\geq}1$ and $k{\geq}0$, the language of $\mathrm{SO}^{\mathit{plog}}$ extends the language of first-order logic with countably many second-order variables $X_1^{r,\log^k}$, $X_2^{r,\log^k}, \dots$ of {\em arity $r$} and {\em exponent $k$}. The set of well-formed $\mathrm{SO}^{\mathit{plog}}$-formulae (wff) of vocabulary $\sigma$ is inductively defined as follows:
\begin{enumerate}[i.]

\item Every well-formed formula of vocabulary $\sigma$ in the existential fragment of first-order logic with equality is a wff.

\item If $X^{r,\log^k}$ is a second-order variable and $t_1, \ldots, t_r$ are first-order terms, then both $X^{r,\log^k}(t_1, \ldots, t_r)$ and $\neg X^{r,\log^k}(t_1, \ldots, t_r)$ are wff.

\item If $\varphi$ and $\psi$ are wff, then $(\varphi \wedge \psi)$ and $(\varphi \vee \psi)$ are wff.
    
\item If $\varphi$ is a wff, $X^{r,\log^k}$ is a second-order variable and $\bar{x}$ is an $r$-tuple of first-order variables, then $\forall \bar{x} (X^{r,\log^k}(\bar{x}) \rightarrow \varphi)$ is a wff.    
    
\item If $\varphi$ is a wff and $x$ is a first-order variables, then $\exists x \varphi$ is a wff.   
     
\item If $\varphi$ is an $\mathrm{SO}^{\mathit{plog}}$-formula and $X^{r,\log^k}$ is a second-order variable, then both $\exists X^{r,\log^k} \varphi$ and $\forall X^{r,\log^k} \varphi$ are wff's.
    
\end{enumerate}
\end{definition}

Note that the first-order terms $t_i$ in these rules are either first-order variables $x_1, x_2, \ldots$ or constant symbols; we do not consider function symbols. Whenever the arity is clear from the context, we write $X^{\log^k}$ instead of $X^{r,\log^k}$. 
    
\begin{definition}[Semantics of $\mathrm{SO}^{\mathit{plog}}$]
Let $\mathbf{A}$ be a $\sigma$-structure where $|A| = n \geq 2$. A valuation over $\mathbf{A}$ is any function \textit{val} which assigns appropriate values to all first- and second-order variables and satisfies the following constraints: 
\begin{itemize}

\item If $x$ is a first-order variable then $\mathit{val}(x) \in A$. 

\item If $X^{r,\log^k}$ is a second-order variable, then 
\[\mathit{val}(X^{r,\log^k}) \in \{R\subseteq A^r \mid |R| \leq (\lceil \log n \rceil)^k\}.\]
\end{itemize}
\end{definition}
As usual, we say that a valuation $\mathit{val}$ is $V$-equivalent to a valuation $\mathit{val}'$ if $\mathit{val}(V') = \mathit{val}'(V')$ for all variables $V'$ other than $V$.

$\mathrm{SO}^{\mathit{plog}}$ extends the notion of satisfaction of first-order logic, with the following rules:
\begin{itemize}
     \item $\mathbf{A},\mathit{val} \models X^{r,\log^k}(x_1,\dots,x_r) $ iff $(\mathit{val}(x_1),\dots,\mathit{val}(x_r))\in \mathit{val}( X^{r,\log^k})$.
     \item $\mathbf{A},\mathit{val} \models \neg X^{r,\log^k}(x_1,\dots,x_r) $ iff $(\mathit{val}(x_1),\dots,\mathit{val}(x_r))\not\in \mathit{val}( X^{r,\log^k})$.
     
     \item $\mathbf{A},\mathit{val} \models \exists X^{r,\log^k} (\varphi)$  iff there is a valuation $\mathit{val}'$ which is $X^{r,\log^k}$-equivalent to $\mathit{val}$ such that $\mathbf{A}, \mathit{val}' \models \varphi$.
     
     \item $\mathbf{A},\mathit{val} \models \forall X^{r,\log^k} (\varphi)$  iff, for all valuations $\mathit{val}'$ which are $X^{r,\log^k}$-equivalent to $\mathit{val}$, it holds that $\mathbf{A}, \mathit{val}' \models \varphi$.
     
\end{itemize}

\begin{remark}\label{r1}
The standard (unbounded) universal quantification of first-order logic formulae of the form 
$\forall x \varphi$ can be expressed in $\mathrm{SO}^{\mathit{plog}}$ by formulae of the form $\forall X^{\log^0} \forall x (X^{\log^0}(x) \rightarrow \varphi)$. Thus, even though $\mathrm{SO}^{\mathit{plog}}$ only allows a restricted form of universal first-order quantification, it can nevertheless express every first-order query.  
\end{remark} 

We denote by $\Sigma^{\mathit{plog}}_m$, where $m \geq 1$, the class of $\mathrm{SO}^{\mathit{plog}}$-formulae of the form:
\[\exists X^{\log^{k_{11}}}_{11} \cdots \exists X^{\log^{k_{1s_1}}}_{1s_1} \forall X^{\log^{k_{21}}}_{21} \cdots \forall X^{\log^{k_{2s_2}}}_{2s_2} \cdots Q X^{\log^{k_{m1}}}_{m1} \cdots Q X^{\log^{k_{ms_m}}}_{ms_m} \psi,\]
where $Q$ is either $\exists$ or $\forall$ depending on whether $m$ odd or even, respectively, and $\psi$ is an $\mathrm{SO}^{\mathit{plog}}$-formula \emph{free} of second-order quantifiers.  
Analogously, we denote by $\Pi^{\mathit{plog}}_m$ the class of $\mathrm{SO}^{\mathit{plog}}$-formulae of the form:
\[\forall X^{\log^{k_{11}}}_{11} \cdots \forall X^{\log^{k_{1s_1}}}_{1s_1} \exists X^{\log^{k_{21}}}_{21} \cdots \exists X^{\log^{k_{2s_2}}}_{2s_2} \cdots Q X^{\log^{k_{m1}}}_{m1} \cdots Q X^{\log^{k_{ms_m}}}_{ms_m} \psi.\]

We say that an $\mathrm{SO}^{\mathit{plog}}$-formula is in \emph{Skolem normal form} (SNF aka \emph{quantifier prenex normal form}) if it belongs to either $\Sigma^{\mathit{plog}}_m$ or $\Pi^{\mathit{plog}}_m$ for some $m \geq 1$.

\begin{lemma}\label{lem-snf}

For every $\mathrm{SO}^{\mathit{plog}}$-formula $\varphi$, there is an equivalent $\mathrm{SO}^{\mathit{plog}}$-formula $\varphi'$ that is in SNF.

\end{lemma}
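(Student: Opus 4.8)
The plan is to prove Lemma~\ref{lem-snf} by structural induction on the formula $\varphi$, pushing all second-order quantifiers to the front while preserving equivalence, exactly as one does for the classical second-order prenex normal form theorem, but taking care that the syntactic restrictions of $\mathrm{SO}^{\mathit{plog}}$ (no unrestricted $\forall x$, only guarded first-order universal quantification via clause~iv, and the arity/exponent annotations on second-order variables) are respected at every step. First I would establish a renaming (alpha-conversion) lemma so that, without loss of generality, every second-order variable is bound at most once and no variable occurs both free and bound; this is the standard device that makes the prenex-movement rules sound, and it carries over verbatim here since renaming a variable $X^{r,\log^k}$ to a fresh $Y^{r,\log^k}$ of the same arity and exponent changes neither the semantics nor well-formedness.

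The core of the argument is the set of quantifier-movement equivalences. For the propositional connectives I would use the rules
\begin{align*}
(\exists X^{r,\log^k}\varphi)\vee\psi &\equiv \exists X^{r,\log^k}(\varphi\vee\psi), &
(\forall X^{r,\log^k}\varphi)\vee\psi &\equiv \forall X^{r,\log^k}(\varphi\vee\psi),\\
(\exists X^{r,\log^k}\varphi)\wedge\psi &\equiv \exists X^{r,\log^k}(\varphi\wedge\psi), &
(\forall X^{r,\log^k}\varphi)\wedge\psi &\equiv \forall X^{r,\log^k}(\varphi\wedge\psi),
\end{align*}
which are valid precisely because $X^{r,\log^k}$ does not occur free in $\psi$ after renaming; all four of $\varphi\vee\psi$, $\varphi\wedge\psi$ remain wff by clause~iii, so no illegal formulae are produced. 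The subtle cases are the first-order quantifier clauses iv and v, since a second-order quantifier buried inside $\varphi$ must be pulled out past a first-order one. For the existential first-order case (clause~v) the rule $\exists x\, \exists X^{r,\log^k}\varphi \equiv \exists X^{r,\log^k}\, \exists x\,\varphi$ and its universal analogue $\exists x\,\forall X^{r,\log^k}\varphi \equiv \forall X^{r,\log^k}\,\exists x\,\varphi$ hold because $x$ is first-order and $X^{r,\log^k}$ second-order, so their ranges are independent and the swap is sound; the result stays a wff because $\exists x\,\varphi$ is formed by clause~v and $\exists x$ is applied to the matrix before the second-order prefix is assembled.

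The step I expect to be the main obstacle is the guarded first-order universal quantifier of clause~iv, namely a subformula $\forall\bar{x}\,(X^{r,\log^k}(\bar{x})\rightarrow \varphi)$ in which $\varphi$ itself begins with a second-order quantifier, say $\exists Y^{s,\log^\ell}\chi$. One cannot naively write $\forall\bar{x}\,(X^{r,\log^k}(\bar{x})\rightarrow \exists Y^{s,\log^\ell}\chi)$ and then hoist $\exists Y^{s,\log^\ell}$ outward, because the hoisted quantifier would escape the guard $X^{r,\log^k}(\bar{x})$ and change the meaning: $Y$ could then be chosen uniformly rather than depending on $\bar{x}$. The fix is to Skolemize the dependency of $Y$ on the guarded tuple $\bar{x}$. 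Since the guard $X^{r,\log^k}(\bar{x})$ forces $\bar{x}$ to range over only the at most $\log^k n$ tuples in the relation assigned to $X^{r,\log^k}$, the witness relation $Y$ for each such $\bar{x}$ can be amalgamated into a single relation of size at most $\log^k n\cdot\log^\ell n=\log^{k+\ell} n$ by tagging each witness tuple with the corresponding $\bar{x}$; this is still poly-logarithmic, so the amalgamated variable $Z^{r+s,\log^{k+\ell}}$ is a legitimate $\mathrm{SO}^{\mathit{plog}}$ second-order variable, and I would replace occurrences $Y^{s,\log^\ell}(\bar{y})$ by $Z^{r+s,\log^{k+\ell}}(\bar{x},\bar{y})$. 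For the dual case where $\varphi$ begins with $\forall Y^{s,\log^\ell}\chi$, the quantifier commutes with the outer $\forall\bar{x}$ directly (two universals in different sorts swap freely), so no Skolemization is needed. After handling these cases the induction goes through: each movement step strictly decreases the number of connectives and first-order quantifiers lying above the outermost second-order quantifier, so the process terminates with all second-order quantifiers at the front, yielding a $\Sigma^{\mathit{plog}}_m$ or $\Pi^{\mathit{plog}}_m$ formula; I would close by remarking that a final renaming collects adjacent like quantifiers into blocks so that the result matches the prescribed alternating-block shape.
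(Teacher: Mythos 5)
Your overall plan (rename apart, then push second-order quantifiers leftward by local equivalences) matches the paper's, but two of your movement rules are unsound as stated, and both failures occur exactly at the alternation points that are the whole difficulty of the lemma. First, the ``universal analogue'' $\exists x\,\forall X^{r,\log^k}\varphi \equiv \forall X^{r,\log^k}\,\exists x\,\varphi$ is false: independence of the two ranges does not let an $\exists\forall$ prefix commute into a $\forall\exists$ prefix, because on the right-hand side the witness for $x$ may depend on $X$. Concretely, for $\varphi \equiv \neg X^{1,\log^0}(x)$ on a structure with at least two elements, $\forall X^{1,\log^0}\exists x\,\neg X(x)$ holds while $\exists x\,\forall X^{1,\log^0}\,\neg X(x)$ fails. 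The paper's fix is to encode the first-order existential as a second-order existential over a singleton (exponent $0$) placed \emph{before} the inner universal, keeping a guarded $\exists x$ inside: $\exists x\,\forall Y^{\log^k}\psi \equiv \exists X^{\log^0}\forall Y^{\log^k}\exists x\,(X^{\log^0}(x)\wedge\psi)$.

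Second, your Skolemization of $\forall\bar x\,(X^{r,\log^k}(\bar x)\to\exists Y^{s,\log^\ell}\chi)$ into $\exists Z^{r+s,\log^{k+\ell}}\forall\bar x\,(X(\bar x)\to\chi[Y(\bar y):=Z(\bar x,\bar y)])$ is only an implication from left to right. In the converse direction the section $Z|_{\bar a}=\{\bar b\mid Z(\bar a,\bar b)\}$ may contain up to $\lceil\log n\rceil^{k+\ell}$ tuples, which is not a legal valuation of $Y^{s,\log^\ell}$; for instance with $\ell=0$ and $\chi\equiv\exists\bar y_1\bar y_2\,(Y(\bar y_1)\wedge Y(\bar y_2)\wedge\bar y_1\neq\bar y_2)$ the left-hand side is unsatisfiable (a relation of size at most one has no two distinct tuples) while the right-hand side is satisfiable with $X$ a single tuple $\bar a$ and $Z=\{(\bar a,\bar b_1),(\bar a,\bar b_2)\}$. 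Repairing this would require adding a cardinality constraint on every section of $Z$. The paper avoids Skolemization altogether by applying the same singleton device to the guard: $\forall\bar x\,(X(\bar x)\to\exists Y\chi)$ is rewritten as $\forall U^{r,\log^0}\exists Y\,\forall\bar x\,(U(\bar x)\to(X(\bar x)\to\chi))$, so that $Y$ is chosen after the second-order (hence front-able) quantifier that fixes $\bar x$, the dependency of the witness on $\bar x$ is preserved, and no arity or exponent blow-up occurs.
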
 

The straightforward proof is given in Appendix \ref{app-snf}.

\subsection{Examples of Queries in \texorpdfstring{$\mathrm{SO}^{\mathit{plog}}$}{TEXT}}\label{sec:examples2}

We start with a simple, but useful example. Let $X$ and $Y$ be $\mathrm{SO}^{\mathit{plog}}$ variables of the form $X^{r_1,\log^{k}}$ and $Y^{r_2,\log^{k}}$. The following $\Sigma^{\mathit{plog}}_1$ formula, denoted as $|X| \leq |Y|$, express that the cardinality of (the relation assigned by the current valuation of) $X$ is less than or equal to that of $Y$. \\[0.1cm]
\hspace*{0.38cm}$\exists R \Big(\forall \bar{x} \big(X(\bar{x}) \to \exists \bar{y} \big( Y(\bar{y})\wedge R(\bar{x},\bar{y}) \wedge \forall \bar{z} (X(\bar{z})\to (\bar{z}{\neq} \bar{x} \to \neg R(\bar{z},\bar{y})))\big)\big)\Big)$,\\[0.1cm]
where $R$ is an $\mathrm{SO}^{\mathit{plog}}$ variable of arity $r_1+r_2$ and exponent $k$. We write $|X|{=}|Y|$ to denote $|X|{\leq}|Y| \wedge |Y|{\leq}|X|$.

Let $G=(V,E)$ be an $n$-node undirected graph. The following sentence expresses a poly-logarithmically bounded version of the clique NP-complete problem. It holds iff $G$ contains a clique of size $\lceil \log n\rceil^k$. 
\[\exists I S \big(\mathrm{DEF}_k(I) \wedge |S|{=}|I| \wedge\forall x \big(S(x) {\to} (V(x) \wedge \forall y (S(y) \to (x\neq y \to (E(x,y) \wedge E(y,x)))))\big)\big)\]
Other bounded versions of classical Boolean NP-complete problems that are easily expressible in $\Sigma^{\mathit{plog}}_1$ are for instance to decide whether $G$ has an induced subgraph of size $\lceil \log n \rceil^k$ that is $3$-colourable, or whether a $G$ has an induced subgraph which is isomorphic to another given graph of at most polylog size w.r.t. the size of $G$.

\begin{remark}
Although possible, it is much more complex and cumbersome to write these examples of queries in Barrington's restricted second-order logic $\mathrm{SO}^r$. For instance we cannot directly express, as we do in the $\mathrm{SO}^{\mathit{plog}}$-formula that defines the poly-logarithmically bounded version of the clique problem, that there is a set $S$ of arbitrary nodes of $G$ (where $S$ is of size $\lceil \log n\rceil^k$) such that the sub-graph induced by $S$ in $G$ is a clique. To do that in $\mathrm{SO}^r$ we would need to define instead a set of arbitrary binary numbers, which would need to be encoded into a relation of arity $k+2$ defined on the sub-domain $\{1, \ldots, \log n\}$, and then use $\mathrm{BIT}$ to check whether the nodes of $G$ corresponding to these binary numbers induce a sub-graph of $G$ which is a clique.      
\end{remark}

The class DNFSAT of satisfiable propositional formulas in disjunctive normal form is an example of a natural problem expressible in $\mathrm{SO}^{\mathit{plog}}$.
In the standard encoding of DNF formulae as word models of alphabet $\sigma = \{(,),\wedge,\vee, \neg, 0, 1, X\}$, DNFSAT is decidable in $P$~\cite{Cook_71}. In this encoding, 
the input formula is a disjunction of arbitrarily many clauses enclosed in pairs of matching parenthesis. Each clause is the conjunction of an arbitrary number of literals. Each literal is a variable of the form $X_w$, where $w \in \{0, 1\}^*$, possibly preceded by a negation symbol.
Obviously, the complement NODNFSAT of DNFSAT is also in P. In $\Pi^{\mathrm{plog}}_2$ NODNFSAT can be defined by a sentence stating that for every clause there is a pair of complementary literals. Every clause is logically defined by a pair of matching parentheses such that there is  no parenthesis in between. A pair of complementary literals is defined by a bijection (of size $< \log n$) between the substrings that encode two literals, which preserve the binary numbers and such that exactly one of them is negated. The actual formula $\Pi^{\mathrm{plog}}_2$ is included in Appendix~\ref{appendix:NODNFSAT}. 
Similarly, DNFSAT can be defined $\Sigma^{\mathit{plog}}_2$ by a sentence stating that there is a clause that does not have a pair of complementary literals.



\section{Bounded Binary Arithmetic Operations in \texorpdfstring{$\Sigma^{\mathit{plog}}_1$}{TEXT}}\label{sec:arithexampls}\label{sec:examples}

We define $\Sigma^{\mathit{plog}}_1$-formulae that describe the basic (bounded) arithmetic operations of sum, multiplication, division and modulo among binary positive integers between $0$ and $2^{\lceil\log n \rceil^k}-1$ for some fixed $k \geq 1$. These formulae are latter required for proving our main result regarding the expressive power of the prenex fragments of $\mathrm{SO}^{\mathit{plog}}$.  

Without loss of generality we assume that the size $n$ of the structures in which the formulae are evaluated is at least $3$. This simplifies the presentation avoiding the special case in which $\lceil \log n \rceil = 1$. In our approach, binary numbers between $0$ and $\lceil 2^{(\log n)^k} \rceil - 1$ are represented by means of ($\mathrm{SO}^{\mathit{plog}}$) relations. 
\begin{definition}\label{def:repBinaryNumbers}
Let $b = b_0 \cdots b_l$ be a binary number, where $b_0$ and $b_l$ are the least and most significant bits of $b$, respectively, and $l \leq \lceil \log n \rceil^k$. Let $B = \{0, \ldots, \lceil \log n \rceil -1\}$. The relation $R_b$ \emph{encodes the binary number} $b$ if the following holds: $(a_0, \ldots, a_{k-1}, a_k) \in R_b$ iff $(a_0, \ldots, a_{k-1}) \in B^k$ is the $i$-th tuple in the increasing numerical order (numbers read left to right) of $B^k$, $a_k = 0$ if $i > l$, and $a_k = b_i$ if $0 \leq i \leq l$. 
\end{definition}

Note that the size of $R_b$ is exactly $\lceil \log n \rceil^k$, and thus $R_b$ is a valid valuation for $\mathrm{SO}^{\mathit{plog}}$ variables of the form $X^{k+1, \log^k}$.
The numerical order relation $\leq_k$ among $k$-tuples can be defined as follows: 
\begin{align}
\bar{x} \leq_1 \bar{y} \quad &\equiv \quad x_0 \leq y_0 \quad \text{and} \notag\\
\bar{x} \leq_k \bar{y} \quad &\equiv \quad (x_0 \leq y_0 \wedge x_0 \neq y_0) \vee (x_0 = y_0 \wedge (x_1, \ldots, x_{k-1}) \leq_{k-1} (y_1, \ldots, y_{k-1})) \label{arith1}
\end{align}

In our approach, we need a successor relation $\mathrm{SUCC}_k$ among the $k$-tuples in $B^k$, where $B$ is the set of integers between $0$ and $\lceil \log n \rceil-1$ (cf. Definition~\ref{def:repBinaryNumbers}). 
\begin{align}
  \mathrm{SUCC}_1(\bar{x},\bar{y}) &\equiv y_0 \leq \mathit{logn} \wedge y_0 \neq \mathit{logn} \wedge \mathrm{SUCC}(x_0,y_0) \quad \text{and} \notag\\
\mathrm{SUCC}_k(\bar{x},\bar{y}) &\equiv y_0 \leq \mathit{logn} \wedge y_0 \neq \mathit{logn} \wedge \notag \\
[\, (y_0 = x_0 \wedge & \mathrm{SUCC}_{k-1}(x_1,\dots,x_{k-1},y_1,\dots,y_{k-1})) \, \vee (\mathrm{SUCC}(x_0, y_0) \wedge \notag\\
\mathrm{SUCC} (x_1, & \mathit{logn}) \wedge \cdots \wedge \mathrm{SUCC}(x_{k-1},\mathit{logn}) \wedge y_1= 0 \wedge \cdots \wedge y_{k-1}=0)\, \label{arith2}]
\end{align}

It is useful to define an auxiliary predicate $\mathrm{DEF_k}(I)$, where $I$ is a second-order variable of arity and exponent $k$, such that ${\bf A}, \mathit{val} \models \mathrm{DEF_k}(I)$ if $\mathit{val}(I) = B^k$. Please, note that we abuse the notation, writing for instance $\bar{x} = \bar{0}$ instead of $x_0 = 0 \wedge \cdots \wedge x_{k-1} = 0$. Such abuses of notation should nevertheless be clear from the context.   
\begin{align}
\mathrm{DEF}_k(I) & \equiv \exists \bar{x} ( \bar{x} = \bar{0} \wedge I(\bar{x})) \wedge \forall \bar{y} (I(\bar{y}) \rightarrow ((\mathrm{SUCC}(y_0, \mathit{logn}) \wedge \cdots \notag \\
&  \wedge  \mathrm{SUCC}(y_k, \mathit{logn})) \vee \exists \bar{z} (SUCC_k(\bar{y},\bar{z}) \wedge I(\bar{z})))) \label{arith3}
\end{align}

The formula $\mathrm{BIN}_k(X)$, where $X$ is a second-order variable of arity $k+1$ and exponent $k$, expresses that $X$ encodes (as per Definition~\ref{def:repBinaryNumbers}) a binary number between $0$ and $2^{\lceil\log n \rceil^k} - 1$. 
\begin{equation}
\mathrm{BIN}_k(X) \equiv \exists I (\mathrm{DEF}_k(I) \wedge \forall \bar{x} (I(\bar{x}) \to (X(\bar{x},0) \vee X(\bar{x},1)))) \label{arith4}
\end{equation}

As $X$ is of exponent $k$, the semantics of $\mathrm{SO}^{\mathit{plog}}$ determines that the number of tuples in any valid valuation of $X$ is always bounded by $\lceil\log n \rceil^k$. It is then clear that the previous formula is equivalent to  
\[ \exists I (\mathrm{DEF}_k(I) \wedge \forall \bar{x} (I(\bar{x}) \to ((X(\bar{x},0) \wedge \neg X(\bar{x},1)) \vee (X(\bar{x},1) \wedge \neg X(\bar{x},0))))).\]

In the following, $\mathrm{BIN}_k(X, I)$ denotes the sub-formula $\forall \bar{x} (I(\bar{x}) \to (X(\bar{x},0) \vee X(\bar{x},1)))$ of $\mathrm{BIN}_k(X)$. 

The comparison relations $X =_k Y$ and $X <_k Y$ ($X$ is strictly smaller than $Y$) among binary numbers encoded as second-order relations are defined as follows:
\begin{equation}
X =_k Y \equiv \exists I \big(\mathrm{DEF}_k(I) \wedge \mathrm{BIN}_k(X, I) \wedge \mathrm{BIN}_k(Y,I) \wedge {=_k}(X, Y, I) \big), \label{arith5}
\end{equation}

\noindent
where ${=_k}(X, Y, I) \equiv \forall \bar{x} \big( I(\bar{x}) \to \exists z (X(\bar{x},z) \wedge Y(\bar{x}, z))\big)$.
\begin{equation}
X <_k Y \equiv \exists I \big(\mathrm{DEF}_k(I) \wedge \mathrm{BIN}_k(X, I) \wedge \mathrm{BIN}_k(Y,I) \wedge {<_k}(X, Y, I)\big) , \label{arith6} 
\end{equation}

\noindent
where ${<_k}(X, Y, I) \equiv \exists \bar{x} \big( I(\bar{x}) \wedge X(\bar{x}, 0) \wedge Y(\bar{x}, 1) \land \forall \bar{y} \big(I(\bar{y}) \to (\bar{y} \leq_k \bar{x} \vee \exists z (X(\bar{y},z) \wedge Y(\bar{y}, z)))\big)\big)$.

Sometimes we need to determine if the binary number encoded in (the current valuation of) a second-order variable $X$ of arity $k+1$ and exponent $k$ corresponds to the binary representation of an individual $x$ from the domain. The following $\mathrm{BNUM}_k(X,x)$ formula holds whenever that is the case.  
\begin{align}
    \mathrm{BNUM}_k&(X,x) \equiv 
\exists I \big(\mathrm{DEF}_k(I) \wedge \mathrm{BIN}_k(X,I) \wedge \notag \\
&    \forall \bar{y} \big(I(\bar{y}) \to \big((y_0 = 0 \wedge \cdots \wedge y_{k-2} = 0 \wedge (X(\bar{y}, 1) \leftrightarrow BIT(x, y_{k-1}))) \vee \notag \\
&     \hspace*{1.9cm}(\neg (y_0 = 0 \wedge \cdots \wedge y_{k-2} = 0) \wedge X(\bar{y},0))\big)\big)\big) \label{arith7}
\end{align}

We use $\mathrm{BNUM}_k(X,x,I)$ to denote the sub-formula $\forall \bar{y} (I(\bar{y}) \to ((y_0 = 0 \wedge \cdots \wedge y_{k-2} = 0 \wedge (X(\bar{y}, 1) \leftrightarrow BIT(x, y_{k-1}))) \vee (\neg (y_0 = 0 \wedge \cdots \wedge y_{k-2} = 0) \wedge X(\bar{y},0))))$ of $\mathrm{BNUM}_k(X,x)$.

We now proceed to define $\Sigma^{\mathit{plog}}_1$-formulae that describe basic (bounded) arithmetic operations among binary numbers. We start with $\mathrm{BSUM}_k(X,Y,Z)$, where $X$, $Y$ and $Z$ are free-variables of arity $k+1$ and exponent $k$. This formula holds if (the current valuation of) $X$, $Y$ and $Z$ represent binary numbers between $0$ and $2^{\lceil\log n \rceil^k} - 1$, and $X + Y = Z$. The second-order variables $I$ and $W$ in the formula are of arity $k$ and $k+1$, respectively, and both have exponent $k$. We use the traditional carry method, bookkeeping the carried digits in $W$. 
\begin{align}
\mathrm{BSUM}_k&(X,Y,Z) \equiv \notag \\
\exists I  W \big(&\mathrm{DEF}_k(I) \wedge \mathrm{BIN}_k(X,I) \wedge \mathrm{BIN}_k(Y,I) \wedge \mathrm{BIN}_k(Z,I) \wedge \mathrm{BIN}_k(W,I) \wedge \notag \\
&W(\bar{0},0) \wedge ({<_k}(X, Z, I) \vee {=_k}(X, Z, I)) \wedge ({<_k}(Y, Z, I) \vee {=_k}(Y, Z, I)) \wedge \notag \\
&\forall \bar{x} (I(\bar{x}) \to ( (\bar{x} = \bar{0} \wedge \varphi(X, Y, Z)) \vee  \notag\\
&\hspace*{1.9cm} (\exists \bar{y} (SUCC_k(\bar{y},\bar{x}) \wedge \psi(\bar{x}, \bar{y}, W, X, Y)) \wedge  \alpha(\bar{x}, W, X, Y, Z))))\big) \label{arith8}
\end{align}
where $\varphi$ holds if the value of the first bit of $Z$ is consistent with the sum of the first bits of $X$ and $Y$. $\psi$ holds if the value of the bit in position $\bar{x}$ of $W$ is consistent with the sum of the values of the bits in the position preceding $\bar{x}$ of $W,X$ and $Y$. Finally, $\alpha$ holds if the value of the bit in position $\bar{x}$ of $Z$ is consistent with the sum of the corresponding bit values of $W,X$ and $Z$. For reference, see the actual $\varphi$, $\psi$ and $\alpha$ formulae in Appendix~\ref{app-arithmetics}.

For the operation of (bounded) multiplication of binary numbers, we define a formula $\mathrm{BMULT}_k(X, Y, Z)$, where $X$, $Y$ and $Z$ are free-variables of arity $k+1$ and exponent $k$. This formula holds if (the current valuations of) $X$, $Y$ and $Z$ represent binary numbers between $0$ and $2^{\lceil\log n \rceil^k} - 1$, and $X \cdot Y = Z$. 

The strategy to express the multiplication consists on keeping track of the (partial) sums of the partial products by means of a relation $R \subset B^k \times B^k \times \{0,1\}$ of size $\lceil \log n \rceil^{2k}$ (recall that $B = \{0, \ldots, \lceil \log n \rceil -1\}$). We take $X$ to be the multiplicand and $Y$ to be the multiplier. Let $\bar{a} \in B^k$ be the $i$-th tuple in the numerical order of $B^k$, let $R|_{\bar{a}}$ denote the restriction of $R$ to those tuples starting with $\bar{a}$, i.e., $R|_{\bar{a}} = \{(\bar{b}, c) \mid (\bar{a}, \bar{b}, c) \in R\}$, and let $\mathit{pred}(\bar{a})$ denote the immediate predecessor of $\bar{a}$ in the numerical order of $B^k$, then the following holds:

\begin{enumerate}[a.]
\item If $\bar{a} = \bar{0}$ and $Y(\bar{a},0)$, then $R|_{\bar{a}}$ encodes the binary number $0$.
\item If $\bar{a} = \bar{0}$ and $Y(\bar{a},1)$, then $R|_{\bar{a}} = X$.
\item If $\bar{a} \neq \bar{0}$ and $Y(\bar{a},0)$, then $R|_{\bar{a}} = R|_{\mathit{pred}(\bar{a})}$.
\item If $\bar{a} \neq \bar{0}$ and $Y(\bar{a},1)$, then (the binary number encoded by) $R|_{\bar{a}}$ results from adding $R|_{\mathit{pred}(\bar{a})}$ to the $(i-1)$-bits arithmetic left-shift of $X$.
\end{enumerate}

$\mathrm{BMULT}_k(X, Y, Z)$ holds if $Z = R|_{(a_0, \ldots, a_{k-1})}$ for $a_0 = \cdots = a_{k-1} = \lceil \log n \rceil -1$. 
Following this strategy, it is not difficult to write the actual formula for $\mathrm{BMULT}_k(X, Y, Z)$. See formula~(\ref{arith9}) in Appendix~\ref{app-arithmetics}.

The operations of division and modulo are expressed by $\mathrm{BDIV}_k(X, Y, Z, M)$, where $X$, $Y$, $Z$ and $M$ are free-variables of arity $k+1$ and exponent $k$. This formula holds if $Z$ is the quotient and $M$ the modulo (remainder) of the euclidean division of $X$ by $Y$, i.e., if  $Y\cdot Z + M = X$. See formula~(\ref{arith11}) in Appendix~\ref{app-arithmetics}.


\section{The Poly-logarithmic Time Hierarchy}\label{sec:plh}

The sequential access that Turing machines have to their tapes makes it impossible to compute anything in sub-linear time. Therefore, logarithmic time complexity classes are usually studied using models of computation that have random access to their input. As this also applies to the poly-logarithmic complexity classes studied in this paper, we adopt a Turing machine model that has a \emph{random access} read-only input, similar to the log-time Turing machine in~\cite{barrington:jcss1990}.

A \emph{random-access Turing machine} is a multi-tape Turing machine with (1) a read-only (random access) \emph{input} of length $n+1$, (2) a fixed number of read-write \emph{working tapes}, and (3) a read-write input \emph{address-tape} of length $\lceil \log n \rceil$.

Every cell of the input as well as every cell of the address-tape contains either $0$ or $1$ with the only exception of the ($n+1$)st cell of the input, which is assumed to contain the endmark $\triangleleft$. In each step the binary number in the address-tape either defines the cell of the input that is read or if this number exceeds $n$, then the ($n+1$)st cell containing $\triangleleft$ is read.   

\begin{example}\label{ex:machine}
Let polylogCNFSAT be the class of satisfiable propositional formulae in conjunctive normal form with $c \leq \lceil \log n \rceil^k$ clauses, where $n$ is the length of the formula. Note that the formulae in polylogCNFSAT tend to have few clauses and many literals. We define a random-access Turing machine $M$ which decides polylogCNFSAT. The alphabet of $M$ is $\{0,1,\#,+,-\}$. The input formula is encoded in the input tape as a list of $c \leq \lceil \log n \rceil^k$ indices, each index being a binary number of length $\lceil \log n \rceil$, followed by $c$ clauses. For every $1 \leq i \leq c$, the $i$-th index points to the first position in the $i$-th clause. Clauses start with $\#$ and are followed by a list of literals. Positive literals start with a $+$, negative with a $-$. The $+$ or $-$ symbol of a literal is followed by the ID of the variable in binary. $M$ proceeds as follows: (1) Using binary search with the aid of the ``out of range'' response $\triangleleft$, compute $n$ and $\lceil \log n \rceil$. (2) Copy the indices to a working tape, counting the number of indices (clauses) $c$. (3) Non-deterministically guess $c$ input addresses $a_1, \ldots, a_c$, i.e., guess $c$ binary numbers of length $\lceil \log n \rceil$. (4) Using $c$ $1$-bit flags, check that each $a_1, \ldots, a_c$ address falls in the range of a different clause. (5) Check that each $a_1, \ldots, a_c$ address points to an input symbol $+$ or $-$. (6) Copy the literals pointed by $a_1, \ldots, a_c$ to a working tape, checking that there are \emph{no} complementary literals. (7) Accept if all checks hold.
\end{example}

Let $L$ be a language accepted by a random-access Turing machine $M$. Assume that for some function $f$ on the natural numbers, $M$ makes at most $O(f(n))$ steps before accepting an input of length $n$. If $M$ is deterministic, then we write $L \in \mathrm{DTIME}[f(n)]$. If $M$ is non-deterministic, then we write $L \in \mathrm{NTIME}[f(n)]$. We define the classes of deterministic and non-deterministic poly-logarithmic time computable problems as follows:
\[ \polylog = \bigcup_{k \in \mathbb{N}} \mathrm{DTIME}[\log^k n] \qquad \, \npolylog = \bigcup_{k \in \mathbb{N}} \mathrm{NTIME}[\log^k n] \]
The non-deterministic random-access Turing machine in Example~\ref{ex:machine} clearly works in polylog-time. Therefore, polylogCNFSAT $\in \npolylog$.

In order to relate our logic $\mathrm{SO}^{\mathit{plog}}$ to these Turing complexity classes we adhere to the usual conventions concerning a binary encoding of finite structures~\cite{Immerman99}. Let $\sigma = \{R^{r_1}_1, \ldots, R^{r_p}_p, c_1, \ldots, c_q\}$ be a vocabulary, and let ${\bf A}$ with $A = \{0, 1, \ldots, n-1\}$ be an ordered structure of vocabulary $\sigma$. Each relation $R_i^{\bf A} \subseteq A^{r_i}$ of $\bf A$ is encoded as a binary string $\mathrm{bin}(R^{\bf A}_i)$ of length $n^{r_i}$ where $1$ in a given position indicates that the corresponding tuple is in $R_i^{\textbf{A}}$.
Likewise, each constant number $c^{\bf A}_j$ is encoded as a binary string $\mathrm{bin}(c^{\bf A}_j)$ of length $\lceil \log n \rceil$. The encoding of the whole structure $\mathrm{bin}(\textbf{A})$ is simply the concatenation of the binary strings encodings its relations and constants: 
\[\mathrm{bin}(\textbf{A}) = \mathrm{bin}(R_1^{\textbf{A}}) \cdots \mathrm{bin}(R_p^{\textbf{A}}) \cdot \mathrm{bin}(c^{\bf A}_1) \cdots \mathrm{bin}(c^{\bf A}_q).\] 

The length $\hat{n} = |\mathrm{bin}(\textbf{A})|$ of this string is $n^{r_1}+\cdots+n^{r_p} + q \lceil \log n \rceil$, where $n = |A|$ denotes the size of the input structure ${\bf A}$. Note that $\log \hat{n} \in O(\lceil \log n \rceil)$, so $\mathrm{NTIME}[\log^k \hat{n}] = \mathrm{NTIME}[\log^k n]$ (analogously for $\mathrm{DTIME}$). Therefore, we will consider random-access Turing machines, where the input is the encoding $\mathrm{bin}(\textbf{A})$ of the structure \textbf{A} followed by the endmark $\triangleleft$.

In the present work our machine is also based in an alternating Turing machine. An alternating Turing machine comes with a set of states $Q$ that is partitioned into subset $Q_\exists$ and $Q_\forall$ of so-called existential and universal states. Then a configuration $c$ is accepting iff
\begin{itemize}

\item $c$ is in a final accepting state,

\item $c$ is in an existential state and there exists a next accepting configuration, or

\item $c$ is in a universal state, there exists a next configuration and all next configurations are accepting.

\end{itemize}

In analogy to our definition above we can define a \emph{random-access alternating Turing machine}. The languages accepted by such a machine $M$, which starts in an existential state and makes at most $O(f(n))$ steps before accepting an input of length $n$ with at most $m$ alternations between existential and universal states, define the complexity class $\mathrm{ATIME}[f(n),m]$. Analogously, we define the complexity class $\mathrm{ATIME}^{op}[f(n),m]$ comprising languages that are accepted by a random-access alternating Turing machine that starts in a universal state and makes at most $O(f(n))$ steps before accepting an input of length $n$ with at most $m$ alternations between universal and existential states. With this we define
\[ \tilde{\Sigma}_m^{\mathit{plog}} = \bigcup_{k \in \mathbb{N}} \mathrm{ATIME}[\log^k n,m] \quad \text{and} \quad \tilde{\Pi}_m^{\mathit{plog}} = \bigcup_{k \in \mathbb{N}} \mathrm{ATIME}^{op}[\log^k n,m] . \]

The poly-logarithmic time hierarchy is then defined as $\mathrm{PLH} = \bigcup_{m \ge 1} \tilde{\Sigma}_m^{\mathit{plog}}$. Note that $\tilde{\Sigma}_1^{\mathit{plog}} = \npolylog$ holds. 

\begin{remark}

Note that a simulation of a $\npolylog$ Turing machine $M$ by a deterministic machine $N$ requires checking all computations in the tree of computations of $M$. As $M$ works in time $({\log n})^{O(1)}$, $N$ requires time $2^{{\log n}^{O(1)}}$. This implies $\npolylog \subseteq \mathrm{DTIME}(2^{{\log n}^{O(1)}})$, which is the complexity class called quasipolynomial time of the fastest known algorithm for graph isomorphism \cite{babai:stoc2016}, which further equals 
$\mathrm{DTIME}({n^{{\log n}^{O(1)}}})$\footnote{This relationship appears quite natural in view of the well known relationship $\mathrm{NP} = \mathrm{NTIME}(n^{O(1)}) \subseteq \mathrm{DTIME}(2^{{n}^{O(1)}}) = \mathrm{EXPTIME}$.}.

\end{remark}

\section{Expressive power of the Quantifier-Prenex Fragments of \texorpdfstring{$\mathrm{SO}^{\mathit{plog}}$}{TEXT}} \label{sec:main}

We say that a logic $\mathcal{L}$ captures the complexity class $\mathcal{K}$ iff the following holds:
\begin{itemize} 

\item For every $\mathcal{L}$-sentence $\varphi$ the language $\{\mathrm{bin}({\bf A}) \mid {\bf A} \models \varphi \}$ is in $\mathcal{K}$, and

\item For every property $\mathcal{P}$ of (binary encodings of) structures that can be decided with complexity in $\mathcal{K}$, there is a sentence $\varphi_{\mathcal{P}}$ of $\mathcal{L}$ such that ${\bf A} \models \varphi_{\mathcal{P}}$ iff $\bf A$ has the property $\mathcal{P}$. 

\end{itemize} 

Recall that we only consider ordered, finite structures $\mathbf{A}$. Our main result is that $\mathrm{SO}^{\mathit{plog}}$ captures PLH, which we will prove in this section.

\begin{theorem}\label{pedsoplog}

Over ordered structures with sucessor relation, $\mathrm{BIT}$ and constants for $\log n$, the minimum, second and maximum elements, $\Sigma^{\mathit{plog}}_1$ captures $\npolylog$.

\end{theorem}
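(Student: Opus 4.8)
The plan is to establish the two inclusions demanded by the definition of capture: \emph{soundness} ($\Sigma^{\mathit{plog}}_1 \subseteq \npolylog$) and \emph{completeness} ($\npolylog \subseteq \Sigma^{\mathit{plog}}_1$). For soundness, fix a $\Sigma^{\mathit{plog}}_1$-sentence $\varphi = \exists X_1^{\log^{k_1}} \cdots \exists X_s^{\log^{k_s}} \psi$, where $\psi$ has no second-order quantifiers. I would build a non-deterministic random-access Turing machine $M_\varphi$ that on input $\mathrm{bin}(\mathbf{A})$ first recovers $n$ and $\lceil \log n \rceil$ by binary search against the endmark (as in Example~\ref{ex:machine}), and then non-deterministically guesses each relation $X_i$: since $|X_i| \le \log^{k_i} n$ and each tuple consists of $r_i$ elements of $\lceil \log n \rceil$ bits, this amounts to poly-log many guessed bits stored on a working tape. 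It then evaluates $\psi$ by structural recursion. An existential first-order quantifier $\exists x$ is handled by guessing an element of $A$ (that is, $\lceil \log n \rceil$ bits); a restricted universal $\forall \bar x (X_i(\bar x) \to \cdots)$ is handled by deterministically iterating over the poly-log many tuples stored for $X_i$; atomic formulae over the guessed relations are membership tests in poly-log space, and atomic formulae over $\mathbf{A}$ (including $\leq$, $\mathrm{SUCC}$ and $\mathrm{BIT}$) are decided by computing the relevant position of $\mathrm{bin}(\mathbf{A})$ and performing a single random-access read. Because $\varphi$ is fixed, the nesting of restricted universals multiplies only a constant number of poly-log factors, so every computation branch runs in time $\log^{O(1)} n$; hence the language of $\varphi$ lies in $\npolylog$.

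For completeness, let $M$ be a non-deterministic random-access Turing machine deciding $L$ within $T = O(\log^k n)$ steps. Following Fagin's strategy adapted to the sub-linear regime, I would describe an accepting run of $M$ by existentially quantified second-order relations of poly-log size, which will form the $\Sigma^{\mathit{plog}}_1$ prefix. The crucial observation is that in $T$ steps $M$ can touch at most $T$ cells of each working tape, while the address tape has length $\lceil \log n \rceil$, so the entire configuration at any instant---state, head positions, working-tape contents, address-tape contents, and the single input bit just read---has poly-log size, and the complete history over all $T$ steps remains poly-log. I would index time steps and tape cells by $k$-tuples over $B = \{0, \ldots, \lceil \log n \rceil - 1\}$ and use the relations $\mathrm{DEF}_k$ and $\mathrm{SUCC}_k$ of Section~\ref{sec:arithexampls} to enumerate them, introducing guessed relations that record, for each time-tuple, the state, the head positions, the symbol under each head, the address-tape contents, the non-deterministic choice taken, and the input bit read.

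The quantifier-free-of-second-order matrix $\psi$ then asserts that (i) the initial configuration is correct (start state, blank working tapes, address-tape and input conventions); (ii) for every time step the successor configuration follows from $M$'s transition relation applied to the recorded symbols, head positions and non-deterministic choice; and (iii) some time step carries an accepting state. Each of (i)--(iii) is a local, finite-control condition, so it is expressible using only existential first-order quantification together with the restricted universal quantifiers $\forall \bar t (\mathrm{TIME}(\bar t) \to \cdots)$ and $\forall \bar c (\mathrm{CELL}(\bar c) \to \cdots)$ ranging over the poly-log index relations; in particular, no unrestricted universal first-order quantifier is required, which is exactly what keeps the sentence inside $\Sigma^{\mathit{plog}}_1$.

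The main difficulty, and the step I expect to be the crux, is enforcing consistency between the guessed run and the genuine input: whenever $M$ reads the input at the address currently on its address tape, the recorded bit must be the corresponding bit of $\mathrm{bin}(\mathbf{A})$. Here I must decode the address $a$---an $O(\log n)$-bit number, hence well within the range $2^{\log^k n}$ handled by the arithmetic of Section~\ref{sec:arithexampls}---to determine into which block $\mathrm{bin}(R_i^{\mathbf{A}})$ or $\mathrm{bin}(c_j^{\mathbf{A}})$ of the encoding it falls and, inside a relation block of arity $r_i$, which tuple $\bar t \in A^{r_i}$ it designates. This is a base-$n$ digit extraction calling for repeated multiplication and division, carried out with $\mathrm{BSUM}_k$, $\mathrm{BMULT}_k$ and $\mathrm{BDIV}_k$; the resulting binary numbers are then converted back to domain elements via $\mathrm{BNUM}_k$, after which the recorded read bit is forced to equal the truth value of the atomic formula $R_i(\bar t)$ (or, for a constant block, the appropriate bit of $c_j$ read off through $\mathrm{BIT}$). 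Assembling these address-decoding gadgets so that they interact correctly with the restricted quantification---and checking that all intermediate arithmetic stays within poly-log-size relations---is the technically delicate heart of the argument. Once it is in place, the conjunction of (i)--(iii) with the input-consistency clause is a $\Sigma^{\mathit{plog}}_1$-sentence that holds on $\mathbf{A}$ precisely when $M$ accepts $\mathrm{bin}(\mathbf{A})$, completing the capture.
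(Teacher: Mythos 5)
Your proposal is correct and follows essentially the same route as the paper: the soundness direction guesses the poly-log-size relations and evaluates the matrix by structural recursion (with restricted universals handled by iterating over the guessed tuples), and the completeness direction is the same Fagin-style encoding of the poly-log-size run via existentially quantified relations indexed by $k$-tuples over $B$, with input consistency enforced through the bounded arithmetic of Section~\ref{sec:arithexampls}. You also correctly identify the address-decoding gadget linking the guessed run to $\mathrm{bin}(\mathbf{A})$ as the technical crux, which is exactly where the paper expends its effort in formulae (\ref{eq-7})--(\ref{eq-9}).
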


\begin{proof}
\ \textbf{Part a.} We first show $\Sigma^{\mathit{plog}}_1 \subseteq NPolyLogTime$, i.e. a non-deterministic random access Turing Machine \textbf{M} can evaluate every sentence $\phi$ in $\Sigma^{\mathit{plog}}_1$ in poly-logarithmic time. 

Let $\phi = \exists X_1^{r_1,\log^{k_1}} \dots \exists X_m^{r_m,\log^{k_m}} \varphi$, where $\varphi$ is a first-order formula with the restrictions given in the definition of $\mathrm{SO}^{\mathit{plog}}$. Given a $\sigma$-structure \textbf{A} with $|dom(\textbf{A})|=n$, \textbf{M} first guesses values for $X_1^{r_1,\log^{k_1}},\dots, X_m^{r_m,\log^k_m}$ and then checks if $\varphi$ holds. As $val(X_i^{r_i,\log^{k_i}})$ is a relation of arity $r_i$ with at most $\log^{k_i} n$ tuples, \textbf{M} has to guess $r_i*{(\log(n))}^{k_i}$ values in $A$, each of which encoded in $\left \lceil log(n)\right \rceil$ bits. Thus, the machine has to generate $E = \sum\limits_{i=1}^{m}\left ( r_i*{(\log(n))}^{k_i+1}\right )$ bits in total. As $E \in O(\lceil \log n \rceil^{k_{\max} + 1})$, the generation of the values $val(X_i^{r_i,\log^{k_i}})$ requires time in $O(\lceil \log n \rceil^{k^\prime})$ for some $k^\prime$.
    
In order to check the validity of $\varphi$ we distinguish two cases: (1) $\varphi = \exists x \psi$, and (2) $\varphi = \forall \bar{x}(X^{r,\log^{k}}(\bar{x})\to \psi)$.

\ \\ \textbf{Case (1).}
Here \textbf{M} first guesses $x$, for which at most $\lceil \log n \rceil$ steps are required. Then we get the following cases for $\psi$:
\begin{itemize}

\item If $\psi$ is a first-order quantifier-free formula, then according to the proof of \cite[Theorem 5.30]{Immerman99} $\varphi$ can be checked in $O(log(n))$ time. Thus, checking $\varphi$ can be done in poly-logarithmic time.
        
\item If $\psi = \psi_1 \vee \psi_2$ (or $\psi = \psi_1 \wedge \psi_2$), then \textbf{M} has to check if $\psi_1$ or $\psi_2$ (or both, respectively) holds, which requires at most the time for checking both $\psi_1$ and $\psi_2$. Thus, by induction the checking of $\varphi$ can be done in poly-logarithmic time.
        
\item If $\psi$ is not an first-order quantifier-free formula, then by induction the checking of $\psi$ can be done in poly-logarithmic time, hence also the checking of $\varphi$.

\end{itemize}

\textbf{Case (2).}
As \textbf{M} has already guessed a value for $X^{r,\log^{k}}$, it remains to check $\{ \bar{a} / \bar{x} \} . \psi$ for every element $\bar{a}$ in this relation. The number of such tuples is in $O(\log^{k^\prime} n)$, and we get the following cases:
\begin{itemize}
    
\item If $\psi$ is a first-order quantifier-free formula, a disjunction or a conjunction, then we can use the same argument as in case (1).
        
\item If $\psi$ is not an first-order quantifier-free formula, then by induction each check $\{ \bar{a} / \bar{x} \} . \psi$ can be done in poly-logarithmic time, say in $O(\log^{\ell} n)$, hence also the checking of $\varphi$ is done in $O(\log^{k^\prime + \ell} n)$ time.
        
\end{itemize}

\noindent
\textbf{Part b.} Next we show $\npolylog  \subseteq \Sigma^{\mathit{plog}}_1$. For this let \textbf{M} be a non-deterministic random access Turing Machine that accepts a $\sigma$-structure \textbf{A} in $O(\log^k n)$ steps. We may assume that the input is encoded by the bitstring $\mathrm{bin}(\textbf{A})$ of length $\left\lceil log(\hat{n}) \right\rceil$. Furthermore, let the set of states be $Q = \{ q_0,\dots,q_f \}$, where $q_0$ is the initial state, $q_f$ is the only final state, and in the initial state the header of the tapes are in position 0, the working tape is empty and the index-tape is filled with zeros.

As \textbf{M} runs in time $\left\lceil \log n\right\rceil^k$, it visits at most $\left\lceil \log n\right\rceil^k$ cells in the working tape. Thus, we can model positions on the working tape and time by $k$-tuples $\bar{p}$ and $\bar{t}$, respectively. Analogously, the length of the index tape is bound by $\left\lceil \log n\right\rceil^{k'} $, so we can model the positions in the index tape by $k'$-tuples $\bar{d}$. We use auxiliary relations $I$ and $I^\prime$ to capture $k$-tuples and $k^\prime$-tuples, respectively, over $\{ 0 ,\dots, \lceil \log n \rceil \}$.  We define those relations using $\mathrm{DEF_k}(I)$ and $\mathrm{DEF_{k'}}(I')$ in the same way as in (\ref{arith3}).
As \textbf{M} works non-deterministically, it makes a choice in every step. Without loss of generality we can assume that the choices are always binary, which we capture by a relation $C$ of arity $k + k^\prime + 1$; $C(\bar{t},\bar{d},c)$ expresses that at time $\bar{t}$ any position $\bar{d}$ in the index-tape has the value $c \in \{0,1\}$, which denotes the two choices.

In order to construct a sentence in $\Sigma^{\mathit{plog}}_1$ that is satisfied by the structure \textbf{A} iff the input $\mathrm{bin}(\textbf{A})$ is accepted by \textbf{M} we first describe logically the operation of the random access Turing machine \textbf{M}, then exploit the acceptance of $\mathrm{bin}(\textbf{A})$ for at least one computation path.

We use predicates $T_0, T_1, T_2$, where $T_i(\bar{t},\bar{p})$ indicates that at time $\bar{t}$ the working tape at position $\bar{p}$ contains $i$ for $i \in \{ 0,1 \}$ and the blank symbol for $i=2$, respectively. The following formulae express that the working tape is initially empty, and at any time a cell can only contain one of the three possible symbols:
\begin{align}
& \forall \bar{p} \; I(\bar{p}) \rightarrow T_2(\bar{0},\bar{p}) \notag \\
& \forall \bar{t} I(\bar{t}) \rightarrow \forall \bar{p} \; I(\bar{p}) \rightarrow ( T_0(\bar{t},\bar{p}) \rightarrow \neg T_1(\bar{t},\bar{p}) \wedge \neg T_2(\bar{t},\bar{p}) ) \notag \\
& \forall \bar{t} I(\bar{t}) \rightarrow \forall \bar{p} \; I(\bar{p}) \rightarrow ( T_1(\bar{t},\bar{p}) \rightarrow \neg T_0(\bar{t},\bar{p}) \wedge \neg T_2(\bar{t},\bar{p}) ) \notag \\
& \forall \bar{t} I(\bar{t}) \rightarrow \forall \bar{p} \; I(\bar{p}) \rightarrow ( T_2(\bar{t},\bar{p}) \rightarrow \neg T_0(\bar{t},\bar{p}) \wedge \neg T_1(\bar{t},\bar{p}) ) \label{eq-2}
\end{align}

Then we use a predicate $H$ with $H(\bar{t},\bar{p})$ expressing that at time $\bar{t}$ the header of the working tape is in position $\bar{p}$. This gives rise to the following formulae:
\begin{gather}
H(\bar{0},\bar{0}) \qquad\qquad\qquad \forall \bar{t} I(\bar{t}) \rightarrow \exists \bar{p} ( I(\bar{p}) \wedge H(\bar{t},\bar{p}) ) \notag \\
\forall \bar{t} I(\bar{t}) \rightarrow \forall \bar{p} \; I(\bar{p}) \rightarrow ( H(\bar{t},\bar{p}) \rightarrow \forall \bar{p}^\prime ( I(\bar{p}^\prime) \rightarrow H(\bar{t},\bar{p}^\prime) \rightarrow \bar{p} = \bar{p}^\prime ) ) \label{eq-3}
\end{gather}

Predicates $S_i$ for $i = 1 ,\dots, f$ are used to express that at time $\bar{t}$ the machine \textbf{M} is in the state $q_i \in Q$, which using (\ref{arith1}) gives rise to the formulae
\begin{gather}
S_0(\bar{0}) \wedge \forall \bar{t}\big( I(\bar{t}) \rightarrow ( \bar{t} \neq \bar{0} \rightarrow \bigvee_{0 \le i \le f} S_i(\bar{t}) )\big) \wedge 
\exists \bar{t}_f \Big( \forall \bar{t} \big( I(\bar{t}) \rightarrow \bar{t} \le_k \bar{t}_f \big)  \wedge S_f(\bar{t}_f)\Big) \notag \\
\bigwedge_{0 \le i \le f} \; \forall \bar{t} \Big(I(\bar{t}) \rightarrow \big(S_i(\bar{t}) \rightarrow \bigwedge_{0 \le j \le f, j \neq i} \neg S_j(\bar{t})\big)\Big) \label{eq-4}
\end{gather}

The following formulae exploiting (\ref{arith2}) describe the behaviour of \textbf{M} moving in every step its header either to the right, to the left or not at all (which actually depends on the value for $c$ in $C(\bar{t},\bar{d},c)$):
\begin{align}
& \forall \bar{t} I(\bar{t}) \wedge \bar{t} \neq \bar{0} \wedge H(\bar{t},\bar{0}) \rightarrow
\exists \bar{t}^\prime, \bar{d}^\prime \; ( \mathrm{SUCC}_k(\bar{t}^\prime,\bar{t}) \wedge \notag \\
& \hspace*{3cm} \mathrm{SUCC}_k(\bar{0},\bar{d}^\prime) \wedge ( H(\bar{t}^\prime,\bar{0}) \vee H(\bar{t}^\prime,\bar{d}^\prime) ) ) \notag \\
& \forall \bar{t}, \bar{d} I(\bar{t}) \wedge \bar{t} \neq \bar{0} \wedge I^\prime(\bar{d}) \wedge H(\bar{t},\bar{d}) \rightarrow ( \bar{d} \neq \bar{0} \wedge \bar{d} \neq \mathit{last} \rightarrow \notag \\
& \hspace*{1.5cm} \exists \bar{t}^\prime, \bar{d}_1 , \bar{d}_2 \; ( \mathrm{SUCC}_k(\bar{t}^\prime,\bar{t}) \wedge \mathrm{SUCC}_k(\bar{d}_1,\bar{d}) \wedge \mathrm{SUCC}_k(\bar{d},\bar{d}_2) \wedge \notag \\
& \hspace*{3cm}  ( H(\bar{t}^\prime,\bar{d}_1) \vee H(\bar{t}^\prime,\bar{d}) \vee H(\bar{t}^\prime,\bar{d}_2) ) ) \notag \\
& \forall \bar{t} I(\bar{t}) \wedge \bar{t} \neq \bar{0} \wedge H(\bar{t},\mathit{last}) \rightarrow
\exists \bar{t}^\prime, \bar{d}^\prime \; ( \mathrm{SUCC}_k(\bar{t}^\prime,\bar{t}) \wedge \notag \\
& \hspace*{3cm} \mathrm{SUCC}_k(\bar{d}^\prime,\mathit{last}) \wedge ( H(\bar{t}^\prime,\mathit{last}) \vee H(\bar{t}^\prime,\bar{d}^\prime) ) ) \label{eq-5}
\end{align}
    
Furthermore, we use predicates $L_i$ ($i \in \{ 0, 1, 2 \}$) to describe that \textbf{M} reads at time $\bar{t}$ the value $i$ (for $i \in \{ 0, 1 \}$) or $\triangleleft$ for $i=2$, respectively. As exactly one of these values is read, we obtain the following formulae:
\begin{alignat}{2}
& \forall \bar{t} \; I(\bar{t}) \rightarrow ( L_0(\bar{t}) \vee L_1(\bar{t}) \vee L_2(\bar{t}) ) \quad &
& \forall \bar{t} \; I(\bar{t}) \rightarrow ( L_0(\bar{t}) \rightarrow \neg L_1(\bar{t}) \wedge \neg L_2(\bar{t}) ) \notag \\
& \forall \bar{t} \; I(\bar{t}) \rightarrow ( L_1(\bar{t}) \rightarrow \neg L_0(\bar{t}) \wedge \neg L_2(\bar{t}) ) \quad &
& \forall \bar{t} \; I(\bar{t}) \rightarrow ( L_2(\bar{t}) \rightarrow \neg L_0(\bar{t}) \wedge \neg L_1(\bar{t}) ) \label{eq-6}
\end{alignat}

The conjunction of the formulae in (\ref{eq-2})-(\ref{eq-6}) with all second-order variables existentially quantified merely describes the operation of the Turing machine \textbf{M}. If \textbf{M} accepts the input $\mathrm{bin}(\textbf{A})$ for at least one computation path, i.e. for one sequence of choices, we can assume without loss of generality that if at time $\bar{t}$ with $\bar{d}$ on the index-tape the bit $c$ indicating the choice equals the value read from the input, then this will lead to acceptance. Therefore, in order to complete the construction of the required formula in $\Sigma^{\mathit{plog}}_1$ we need to express this condition in our logic.

The bit \textbf{M} reads from the input corresponds to the binary encoding of the relations and constants in the structure $\mathbf{A}$. In order to detect, which tuple or which constant is actually read, we require several auxiliary predicates. We use predicates $M_i$ ($i=0, \dots, k^\prime$) to represent the numbers $n^i$, which leads to the formulae
\begin{gather}
\mathrm{BNUM}_{k^\prime}(M_0,1,I^\prime),\;\mathrm{BNUM}_{k^\prime}(M_1,\max,I^\prime) \;\text{and }
 \mathrm{BMULT}_{k^\prime}(M_1,M_{i-1},M_i) \;\text{for}\; i \ge 2 \label{eq-7}
\end{gather}

For this we exploit the definitions in (\ref{arith7}) and (\ref{arith9}). Note that the latter one is a formula that is not in SNF. In Section \ref{sec:examples} following (\ref{arith9}) we already showed how to turn such a formula into a formula in $\Sigma^{\mathit{plog}}_1$ in SNF, which requires additional auxiliary second-order variables. The same hold for several of the following formulae.

Next we use relations $P_i$ ($i=0,\dots,p+1$) representing the position in $\mathrm{bin}(\textbf{A})$, where the encoding of $R_{i+1}^{\mathbf{A}}$ for the relation $R_i$ starts (for $0 \le i \le p-1$), the encoding of where the constants $c_j$ ($j=1,\dots,q$) starts (for $i=p$), and finally representing the length of $\mathrm{bin}(\textbf{A})$ (for $i=p+1$). As each constant requires $\lceil \log n \rceil$ bits we further use auxiliary relations $N_i$ (for $i=1,\dots,q$) to represent $i \cdot \lceil \log n \rceil$, which we require to detect, which constant is read. This leads to the following formulae (exploiting (\ref{arith7}) and (\ref{arith8})):
\begin{align}
& \mathrm{BNUM}_{k^\prime}(P_0,0,I^\prime) \; , \; \bigwedge_{1 \le i \le p} \mathrm{BSUM}_{k^\prime}(P_{i-1},M_{r_i},P_i) \; \text{and} \; \mathrm{BSUM}_{k^\prime}(P_p,N_q,P_{p+1}) \notag \\
& \mathrm{BNUM}_{k^\prime}(N_1,\mathit{logn},I^\prime) \quad \text{and} \; \bigwedge_{1 \le i \le q} \mathrm{BSUM}_{k^\prime}(N_{i-1},N_1,N_i) \label{eq-8}
\end{align}

Finally, we can express the acceptance condition linking the relation $C$ to the input $\mathrm{bin}(\textbf{A})$. In order to ease the representation we use for fixed $\bar{t}$ the shortcut $C_{\bar{t}}$ with $C_{\bar{t}}(\bar{d},c) \leftrightarrow C(\bar{t},\bar{d},c)$. Likewise we use shortcuts with subscript $\bar{t}$ for additional auxiliary predicates $D_i$ ($i=0,\dots,p$), $Q_i$, $Q_i^\prime$, $Q_i^{\prime\prime}$ and $Q_i^{\prime\prime\prime}$ ($i=1,\dots,r_{max}$) which we need for arithmetic operations on the length of $\mathrm{bin}(\textbf{A})$, which is represented by $P_{p+1}$. We also use $\le_{k^\prime}(X,Y,I)$ as shortcut for $<_{k^\prime}(X,Y,I) \vee =_{k^\prime}(X,Y,I)$ defined in~(\ref{arith5}) and~(\ref{arith6}).

For fixed $\bar{t}$ with $I(\bar{t})$ the relation $C_{\bar{t}}$ represents a position in the bitstring $\mathrm{bin}(\textbf{A})$, which is either at the end, within the substring encoding the constants $c_j^{\mathbf{A}}$, or within the substring encoding the relation $R_i^{\mathbf{A}}$. The following three formulae (using (\ref{arith6}), (\ref{arith7}), (\ref{arith8}), and (\ref{arith11})) with fixed $\bar{t}$ correspond to these cases:
\begin{align}
    & <_{k^\prime}(P_{p+1},C_{\bar{t}},I^\prime) \rightarrow L_2(\bar{t}) \notag \\
    & <_{k^\prime}(P_p,C_{\bar{t}},I^\prime) \wedge \le_{k^\prime}(C_{\bar{t}},P_{p+1},I^\prime) \wedge \mathrm{BSUM}_{k^\prime}(P_p,D_{0,\bar{t}},C_{\bar{t}}) \wedge \notag \\
      & \mathrm{BDIV}_{k^\prime}(D_{0,\bar{t}}, N_1, Q_{1,\bar{t}}, Q_{1,\bar{t}}^\prime)\to\exists xy \big( \mathrm{BNUM}_{k^\prime}(Q_{1,\bar{t}},x) \wedge \mathrm{BNUM}_{k^\prime}(Q_{1,\bar{t}}^\prime,y)
      \wedge \notag\\
      & \hspace*{5.6cm}\mathrm{BIT}(c_x,y) {\leftrightarrow} L_1(\bar{t})\big)
    \notag\\
    & \bigwedge_{1 \le i \le p} <_{k^\prime}(P_{i-1},C_{\bar{t}},I^\prime) \wedge \le_{k^\prime}(C_{\bar{t}},P_i,I^\prime) \wedge \mathrm{BSUM}_{k^\prime}(P_{i-1},D_{i,\bar{t}},C_{\bar{t}}) \rightarrow \notag \\
    & \quad \exists \bar{x} \Bigg( \bigwedge_{1 \le j \le r_i} \Big( \mathrm{BNUM}_{k^\prime}(Q_{j,\bar{t}}^{\prime\prime\prime},x_j)
    \wedge \mathrm{BDIV}_{k^\prime}(D_{i,\bar{t}}, M_j, Q_{j,\bar{t}}, Q_{j,\bar{t}}^\prime)\notag\\
    & \quad \quad \quad\quad \quad \quad\quad \quad \quad\quad \quad \quad\wedge \mathrm{BDIV}_{k^\prime}(Q_{j,\bar{t}}, M_1, Q_{j,\bar{t}}^{\prime\prime}, Q_{j,\bar{t}}^{\prime\prime\prime}) \Big)\wedge
    \notag\\
    & \quad\quad\quad\quad \Big(\big(
    L_1(\bar{t}) ) \to  R_i(\bar{x})
    \big)\vee\big(
    L_0(\bar{t}) ) \to  \neg R_i(\bar{x})
    \big) \Big) \Bigg)
    \label{eq-9}
\end{align}
Note that in the second case $Q_{1,\bar{t}}$ represents an index $j \in \{ 1,\dots,q \}$ and $Q_{1,\bar{t}}^\prime$ represents the read bit of the constant $c_j^{\mathbf{A}}$ in $\mathrm{bin}(\textbf{A})$. In the third case $D_{i,\bar{t}}$ represents the read position $d$ in the encoding on relation $R_i^{\mathbf{A}}$, which represents a particular tuple, for which we use $Q_{j,\bar{t}}^{\prime\prime\prime}$ to determine every value of the tuple and depending of the read in $L_i(\bar{t})$ check if that particular tuple is in the relation or not.

Finally, the sentence $\Psi$ describing acceptance by \textbf{M} results from building the conjunction of the formulae in (\ref{eq-2})-(\ref{eq-9}), expanding the macros as shown in Section \ref{sec:examples}, which brings in additional second-order variables, and existentially quantifying all second-order variables. Due to our construction we have $\mathbf{A} \models \Psi$ iff $\mathbf{A}$ is accepted by \textbf{M}.\end{proof}
\begin{theorem}\label{pedsoplog2}

Over ordered structures with sucessor relation, $\mathrm{BIT}$ and constants for $\log n$, the minimum, second and maximum elements, $\Pi^{\mathit{plog}}_1$ captures $\tilde{\Pi}^{\mathit{plog}}_1$.

\end{theorem}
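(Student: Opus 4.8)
The plan is to read Theorem~\ref{pedsoplog2} off Theorem~\ref{pedsoplog} by duality, working on both the machine and the logic side. On the machine side I would first record that $\tilde{\Pi}^{\mathit{plog}}_1$ is exactly the complement class co-$\npolylog$: the acceptance condition for random-access alternating machines recalled above is self-dual, so interchanging existential and universal states while simultaneously swapping accept with reject turns a purely existential (non-deterministic) poly-logarithmic-time machine for a language $L$ into a purely universal one for $\overline{L}$, and back. Combined with $\tilde{\Sigma}^{\mathit{plog}}_1 = \npolylog$, this gives the clean statement that a Boolean query $Q$ lies in $\tilde{\Pi}^{\mathit{plog}}_1$ if and only if $\overline{Q}$ lies in $\npolylog$.

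On the logic side the crux is a duality lemma: negation maps $\Sigma^{\mathit{plog}}_1$ to $\Pi^{\mathit{plog}}_1$ and back, i.e. for every $\Sigma^{\mathit{plog}}_1$ sentence $\varphi$ there is a $\Pi^{\mathit{plog}}_1$ sentence equivalent to $\neg\varphi$, and symmetrically. Granting it, the two halves of the capture statement follow immediately. For completeness, if $Q\in\tilde{\Pi}^{\mathit{plog}}_1$ then $\overline{Q}\in\npolylog$, so by Theorem~\ref{pedsoplog} some $\Sigma^{\mathit{plog}}_1$ sentence $\varphi$ defines $\overline{Q}$, and the dual $\Pi^{\mathit{plog}}_1$ sentence for $\neg\varphi$ defines $Q$. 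For soundness, if $\phi\in\Pi^{\mathit{plog}}_1$ then the dual $\Sigma^{\mathit{plog}}_1$ sentence equivalent to $\neg\phi$ defines the complement of the query of $\phi$, which by Theorem~\ref{pedsoplog} lies in $\npolylog$; hence the query of $\phi$ lies in co-$\npolylog = \tilde{\Pi}^{\mathit{plog}}_1$.

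To establish the duality lemma I would push the negation inward through $\neg\varphi$. Dual second-order quantifiers are available by clause~(vi), the Boolean connectives obey De~Morgan, second-order literals are closed under negation by clause~(ii), and a negated guarded universal $\neg\forall\bar{x}(X(\bar{x})\to\psi)$ becomes the guarded existential $\exists\bar{x}(X(\bar{x})\wedge\neg\psi)$, still a legal matrix. The only case not directly expressible is the unrestricted first-order existential, where $\neg\exists x\,\psi$ equals $\forall x\,\neg\psi$; here I would invoke Remark~\ref{r1} to rewrite it as $\forall X^{\log^0}\forall x(X^{\log^0}(x)\to\neg\psi)$, trading the unavailable unrestricted universal for a fresh universal second-order quantifier of exponent~$0$ together with a guarded universal. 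Starting from $\Sigma^{\mathit{plog}}_1$, every second-order quantifier thereby produced is universal (the dualized prefix $\forall\bar{X}$ and all the new $\forall X^{\log^0}$), so the result ought to collapse to a single universal block, i.e. a $\Pi^{\mathit{plog}}_1$ sentence; the opposite direction is entirely analogous.

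The step I expect to be genuinely delicate is the prenex normalization that merges these quantifiers into one leading block, i.e. verifying that Lemma~\ref{lem-snf} keeps us at level~$1$ and does not violate the size restriction of $\mathrm{SO}^{\mathit{plog}}$. The dangerous move is the standard one that pulls a universal $\forall X$ leftward past a first-order existential by indexing $X$ with the existential witness: were that witness to range over the whole domain, the indexed relation would blow up to size $n\cdot(\log n)^{O(1)}$, outside every poly-logarithmic bound. The observation that should rescue the argument is that, after dualizing a $\Sigma^{\mathit{plog}}_1$ formula, every surviving first-order existential is \emph{guarded} by a second-order atom $X(\bar{x})$ and hence ranges only over the poly-logarithmically many tuples of $X$; the arity-increasing step then indexes by those tuples alone, so the new variable still has poly-logarithmic size and remains a legitimate $\mathrm{SO}^{\mathit{plog}}$ quantifier. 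I would isolate this as the key technical lemma, proving it as a size-aware refinement of Lemma~\ref{lem-snf}, and it is precisely the place where the poly-logarithmic restriction on the first-order universal quantifier in $\mathrm{SO}^{\mathit{plog}}$ does the real work.
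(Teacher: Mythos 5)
Your reduction to Theorem~\ref{pedsoplog} via complementation is a genuinely different route from the paper's (which re-runs both halves of that proof directly: a purely universal machine that enumerates all values of the universally quantified relations, and the same machine-encoding formula with the second-order quantifiers flipped to universal), but it contains a gap that I do not think can be repaired within your framework. The machine-side step is fine: $\tilde{\Pi}^{\mathit{plog}}_1 = \text{co-}\npolylog$ does follow from the self-duality of alternating acceptance. The problem is the logic-side duality lemma, and specifically the direction you dismiss as ``entirely analogous.'' Negating a $\Sigma^{\mathit{plog}}_1$ sentence turns every unrestricted first-order existential $\exists x\,\psi$ into an unrestricted universal, which you simulate via Remark~\ref{r1} by a \emph{universal} second-order quantifier $\forall X^{\log^0}$; since the dualized prefix is also universal, these can at least in principle be merged into a single universal block (granting your size-aware refinement of Lemma~\ref{lem-snf}). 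But in the direction you actually need for soundness --- producing a $\Sigma^{\mathit{plog}}_1$ sentence equivalent to $\neg\phi$ for $\phi\in\Pi^{\mathit{plog}}_1$ --- the same step produces universal second-order quantifiers $\forall X^{\log^0}$ buried inside a formula whose leading block must be \emph{existential}. These cannot be absorbed into the existential prefix, and no guard is available to bound the offending first-order universal, since a guard must be a relation of poly-logarithmic size and hence cannot cover the domain. The result of your manipulation is a $\Sigma^{\mathit{plog}}_2$ prefix, not $\Sigma^{\mathit{plog}}_1$. The two directions are not symmetric precisely because $\mathrm{SO}^{\mathit{plog}}$ treats first-order $\exists$ and $\forall$ asymmetrically.

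This is not a technicality: the paper explicitly states in Section~\ref{sec:schluss} that the restriction on first-order quantification destroys closure under negation and that characterising $\text{co-}\Sigma^{\mathit{plog}}_m$ and $\text{co-}\Pi^{\mathit{plog}}_m$ is open. Your duality lemma, if it held in both directions, would resolve that open problem for $m=1$, so the burden of proof on it is much higher than a sketch can carry. Your observation about the size-aware prenex commutation (indexing a universally quantified relation by the poly-logarithmically many tuples of a guard) is a nice and, I believe, sound idea for the $\neg\Sigma^{\mathit{plog}}_1\leadsto\Pi^{\mathit{plog}}_1$ half, and it would give the \emph{completeness} direction of the theorem; but the \emph{soundness} direction needs its own argument, and the natural one is the paper's: show directly that a purely universal random-access alternating machine can evaluate a $\Pi^{\mathit{plog}}_1$ sentence in poly-logarithmic time by branching universally over all valuations of the quantified relations and then checking the matrix as in Part~(a) of Theorem~\ref{pedsoplog}.
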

\begin{theorem}\label{pedsoplog3}

Over ordered structures with sucessor relation, $\mathrm{BIT}$ and constants for $\log n$, the minimum, second and maximum elements, $\Sigma^{\mathit{plog}}_m$ captures $\tilde{\Sigma}^{\mathit{plog}}_m$ and $\Pi^{\mathit{plog}}_m$ captures $\tilde{\Pi}^{\mathit{plog}}_m$ for all $m \ge 1$.

\end{theorem}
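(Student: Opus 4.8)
The plan is to establish Theorem~\ref{pedsoplog3} by induction on $m$, using Theorem~\ref{pedsoplog} (the case $\Sigma^{\mathit{plog}}_1 = \npolylog = \tilde{\Sigma}^{\mathit{plog}}_1$) and Theorem~\ref{pedsoplog2} (its dual for $\Pi^{\mathit{plog}}_1$) as the base cases. The inductive structure mirrors the standard proof that second-order logic captures the polynomial-time hierarchy, adapted to the random-access, poly-logarithmic setting. Throughout, I would prove the two claims $\Sigma^{\mathit{plog}}_m = \tilde{\Sigma}^{\mathit{plog}}_m$ and $\Pi^{\mathit{plog}}_m = \tilde{\Pi}^{\mathit{plog}}_m$ simultaneously, since each inductive step for the $\Sigma$-side consumes the $\Pi$-side result at level $m-1$ and vice versa; the two quantifier types are linked by negation/complementation, exactly as $\Sigma$ and $\Pi$ states of an alternating machine are linked.

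For the inclusion $\Sigma^{\mathit{plog}}_m \subseteq \tilde{\Sigma}^{\mathit{plog}}_m$, I would take a sentence $\phi = \exists \bar{X}_1 \forall \bar{X}_2 \cdots Q\bar{X}_m\, \psi$ with $m$ alternating blocks of second-order quantifiers and a second-order-quantifier-free kernel $\psi$. The random-access alternating machine $M$ begins in an existential state and, for the leading block of existential second-order quantifiers, nondeterministically guesses the (poly-logarithmically many) tuples encoding the relations $\bar{X}_1$ exactly as in Part~a of the proof of Theorem~\ref{pedsoplog}; when the prefix switches to a universal block it switches to a universal branching state and universally guesses $\bar{X}_2$, and so on. This produces precisely $m$ alternations. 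After the last block, the remaining formula $\psi$ is second-order-quantifier-free and hence (by the deterministic evaluation argument already developed for the kernel in Theorem~\ref{pedsoplog}) can be checked deterministically in poly-logarithmic time without introducing further alternations. Each guessing phase costs only $O(\log^{k'} n)$ steps, so the whole computation runs in $\mathrm{ATIME}[\log^{k} n, m]$ for a suitable $k$.

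For the converse $\tilde{\Sigma}^{\mathit{plog}}_m \subseteq \Sigma^{\mathit{plog}}_m$, I would analyse an alternating machine $M$ that starts existentially and makes at most $m$ alternations in time $O(\log^k n)$. The key observation is that the computation tree splits into $m$ phases delimited by the alternation points, and within a single phase $M$ behaves like a nondeterministic (respectively co-nondeterministic) poly-logarithmic-time machine. The plan is to peel off the first existential phase: I would introduce existentially quantified second-order variables encoding the nondeterministic choices made during phase~1 (the relation $C$ from the proof of Theorem~\ref{pedsoplog} restricted to that phase, together with the configuration-describing predicates $T_i, H, S_i, L_i$), and assert via a first-order-restricted formula that these choices drive $M$ consistently up to the first alternation point, at which the residual computation from the reached universal configuration is accepting. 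That residual computation is an $\mathrm{ATIME}^{op}[\log^k n, m-1]$ computation, so by the induction hypothesis ($\tilde{\Pi}^{\mathit{plog}}_{m-1} = \Pi^{\mathit{plog}}_{m-1}$) it is definable by a $\Pi^{\mathit{plog}}_{m-1}$ formula; prefixing the existential block yields a $\Sigma^{\mathit{plog}}_m$ sentence. The dual argument, exchanging $\exists/\forall$ and $\Sigma/\Pi$ and using the universal-start machine and Theorem~\ref{pedsoplog2}, handles $\tilde{\Pi}^{\mathit{plog}}_m \subseteq \Pi^{\mathit{plog}}_m$.

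The main obstacle will be the bookkeeping at the alternation boundary in the converse direction: I must cleanly describe the configuration of $M$ at the moment control passes from an existential to a universal phase, and then hand that configuration as the starting point to the inductively-obtained $\Pi^{\mathit{plog}}_{m-1}$ formula. Concretely, the formulae of Theorem~\ref{pedsoplog} describe a machine whose initial configuration is fixed (blank working tape, header at $0$, state $q_0$), whereas the residual machine for phase~2 onwards must start from an \emph{arbitrary} configuration produced by phase~1. The remedy is to generalise the encoding so that the base-case constructions accept the starting configuration as free second-order parameters rather than hard-wiring $T_2(\bar{0},\bar{p})$, $H(\bar{0},\bar{0})$ and $S_0(\bar{0})$; these parameters are exactly the configuration predicates already shared across phases, so they can be existentially quantified in the outer block and referenced by the inner $\Pi^{\mathit{plog}}_{m-1}$ subformula. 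Once this parametrised version of Theorems~\ref{pedsoplog} and~\ref{pedsoplog2} is in hand, the induction closes, and since $\mathrm{PLH} = \bigcup_{m\ge 1}\tilde{\Sigma}^{\mathit{plog}}_m$ and $\mathrm{SO}^{\mathit{plog}} = \bigcup_{m\ge1}(\Sigma^{\mathit{plog}}_m \cup \Pi^{\mathit{plog}}_m)$ by Lemma~\ref{lem-snf}, the equality $\mathrm{SO}^{\mathit{plog}} = \mathrm{PLH}$ follows as an immediate corollary.
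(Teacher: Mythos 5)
Your proposal is correct and follows essentially the same route as the paper's own (quite terse) proof sketch: induction on $m$ grounded in Theorems~\ref{pedsoplog} and~\ref{pedsoplog2}, simulating quantifier blocks by alternating guessing phases in one direction, and peeling off the leading phase of the machine and invoking the induction hypothesis on the residual submachine in the other. Your explicit treatment of the configuration hand-off at the alternation boundary (parametrising the base-case construction by the starting configuration instead of hard-wiring $T_2(\bar{0},\bar{p})$, $H(\bar{0},\bar{0})$, $S_0(\bar{0})$) is a detail the paper's sketch leaves implicit, and is a welcome addition rather than a divergence.
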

Sketches of the proofs of Theorems \ref{pedsoplog2} and \ref{pedsoplog3} will be given in Appendix \ref{app-plh}. With Theorem \ref{pedsoplog3} the following corollary is a straightforward consequence of the definitions.
\begin{corollary}

Over ordered structures with sucessor relation, $\mathrm{BIT}$ and constants for $\log n$, the minimum, second and maximum elements, $\mathrm{SO}^{\mathit{plog}}$ captures the poly-logarithmic time hierarchy PLH.

\end{corollary}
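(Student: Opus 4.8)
The plan is to derive the corollary directly from Theorem~\ref{pedsoplog3} together with the Skolem-normal-form Lemma~\ref{lem-snf}, by verifying the two inclusions demanded by the definition of ``captures''. The two facts I would rely on are that, up to logical equivalence, $\mathrm{SO}^{\mathit{plog}} = \bigcup_{m \ge 1}(\Sigma^{\mathit{plog}}_m \cup \Pi^{\mathit{plog}}_m)$---which is precisely the content of Lemma~\ref{lem-snf}---and that $\mathrm{PLH} = \bigcup_{m \ge 1} \tilde{\Sigma}^{\mathit{plog}}_m$ by definition.

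For the first inclusion (logic into complexity) I would take an arbitrary $\mathrm{SO}^{\mathit{plog}}$-sentence $\varphi$. By Lemma~\ref{lem-snf} there is an equivalent $\varphi'$ in SNF, so $\varphi' \in \Sigma^{\mathit{plog}}_m$ or $\varphi' \in \Pi^{\mathit{plog}}_m$ for some $m \ge 1$. Theorem~\ref{pedsoplog3} then places the language $\{\mathrm{bin}({\bf A}) \mid {\bf A} \models \varphi'\}$ in $\tilde{\Sigma}^{\mathit{plog}}_m$ or $\tilde{\Pi}^{\mathit{plog}}_m$, respectively. Since a random-access alternating machine starting in a universal state is simulated by one that first makes a vacuous existential move, we have $\tilde{\Pi}^{\mathit{plog}}_m \subseteq \tilde{\Sigma}^{\mathit{plog}}_{m+1}$, so in either case the language lies in $\bigcup_{m} \tilde{\Sigma}^{\mathit{plog}}_m = \mathrm{PLH}$. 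As $\varphi$ and $\varphi'$ define the same Boolean query, the first bullet of the capture definition is settled.

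For the converse (complexity into logic) I would take a property $\mathcal{P}$ of binary-encoded structures decidable within $\mathrm{PLH}$. By the definition of $\mathrm{PLH}$ there is an $m$ with $\mathcal{P} \in \tilde{\Sigma}^{\mathit{plog}}_m$, and Theorem~\ref{pedsoplog3} yields a sentence $\varphi_{\mathcal{P}} \in \Sigma^{\mathit{plog}}_m$ with ${\bf A} \models \varphi_{\mathcal{P}}$ iff ${\bf A}$ has $\mathcal{P}$. Because $\Sigma^{\mathit{plog}}_m \subseteq \mathrm{SO}^{\mathit{plog}}$ holds purely syntactically, $\varphi_{\mathcal{P}}$ is the required $\mathrm{SO}^{\mathit{plog}}$-sentence, which completes both inclusions.

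The argument is essentially bookkeeping once Theorem~\ref{pedsoplog3} is available, so I expect no deep obstacle. The one point warranting explicit care is that $\mathrm{PLH}$ is defined solely through the $\tilde{\Sigma}$-levels, whereas an $\mathrm{SO}^{\mathit{plog}}$-sentence in SNF may instead land in a $\Pi^{\mathit{plog}}_m$ fragment; I would therefore justify the level shift $\tilde{\Pi}^{\mathit{plog}}_m \subseteq \tilde{\Sigma}^{\mathit{plog}}_{m+1}$ (equivalently the syntactic $\Pi^{\mathit{plog}}_m \subseteq \Sigma^{\mathit{plog}}_{m+1}$, obtained by prepending a dummy existential block over a fresh variable) rather than taking it for granted, and I would remark that the correspondence between alternation bounds on the machine side and quantifier blocks on the logic side is exact up to this harmless off-by-one.
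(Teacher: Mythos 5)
Your argument is correct and matches the paper's intent: the paper dismisses this corollary as ``a straightforward consequence of the definitions'' given Theorem~\ref{pedsoplog3}, and your write-up simply supplies the bookkeeping (normal form via Lemma~\ref{lem-snf}, then level-by-level application of Theorem~\ref{pedsoplog3}) that the authors leave implicit. Your explicit handling of the $\tilde{\Pi}^{\mathit{plog}}_m \subseteq \tilde{\Sigma}^{\mathit{plog}}_{m+1}$ shift, needed because $\mathrm{PLH}$ is defined only through the $\tilde{\Sigma}$-levels, is a point the paper glosses over and is worth stating.
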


\section{Conclusions}\label{sec:schluss}
We investigated $\mathrm{SO}^{\mathit{plog}}$, a restriction of second-order logic, where second-order quantification ranges over relations of poly-logarithmic size and first-order quantification is restricted to the existential fragment of first-order logic plus universal quantification over variables in the scope of a second-order variable. In this logic we defined the poly-logarithmic hierarchy PLH using fragments $\Sigma^{\mathit{plog}}_m$ (and $\Pi^{\mathit{plog}}_m$) defined by formulae with alternating blocks of existential and universal second-order quantifiers in quantifier-prenex normal form. We showed that the existential fragment $\Sigma^{\mathit{plog}}_1$ captures $\npolylog$, i.e. the class of Boolean queries that can be accepted by a non-deterministic Turing machine with random access to the input in time $O(\log^k n)$ for some $k \ge 0$. In general, $\Sigma^{\mathit{plog}}_m$ captures the class of Boolean queries that can be accepted by an alternating Turing machine with random access to the input in time $O(\log^k n)$ for some $k \ge 0$ with at most $m$ alternations between existential and universal states. Thus, PLH is captured by $\mathrm{SO}^{\mathit{plog}}$. 

For the proofs the restriction of first-order quantification is essential, but it implies that we do not have closure under negation. As a consequence we do not have a characterisation of the classes $\text{co-}\Sigma^{\mathit{plog}}_m$ and $\text{co-}\Pi^{\mathit{plog}}_m$. These consitute open problems. Furthermore, PLH resides in the complexity class PolyLogSpace, which is known to be different from $\mathrm{P}$, but it is conjectured that PolyLogSpace and $\mathrm{P}$ are incomparable. Whether the inclusion of PLH in PolyLogSpace is strict is another open problem.

The theory developed in this article and its proofs make intensive use of alternating Turing machines with random access to the input. We observe that it appears awkward to talk about poly-logarithmic time complexity, when actually an unbounded number of computation branches have to be exploited in parallel. It appears more natural to refer directly to a computation model that involves directly unbounded parallelism such as Abstract State Machines that have already been explored in connection with the investigation of choiceless polynomial time \cite{blass:apal1999}. We also observe that a lot of the technical difficulties in the proofs result from the binary encodings that are required in order to make logical structures accessible for Turing machines. The question is, whether a different, more abstract treatment would help to simplify the technically complicated proofs. These more general questions provide further invitations for future research.


\bibliography{SOpolylog}

\begin{appendix}

\section*{Appendix}

\section{The Hierarchy in \texorpdfstring{$qAC^0$}{TEXT}}\label{hirearchy_in_aAC0}

We assume from the reader a basic knowledge of Circuit Complexity (\cite{Immerman99} is a good reference for the subject). We consider a circuit as a connected acyclic digraph with arbitrary many input  nodes and exactly one output node.
 As in \cite{Barr92} we define  $qAC^0$ as the class of $\mathrm{DTIME} [\log^{O(1)} n]\;$ $\;\mathrm{DCL}$ uniform families of Boolean circuits of unbounded fan-in, Size $2^{\log^{O(1)} n}$ and Depth $O(1)$.
If $\mathcal{C}$ is a family of circuits, we consider that $(h,t,g,z_n) \in \mathrm{DCL(\mathcal{C})}$ iff the gate with number $h$ is of type $t$ and the gate with number $g$ is a child of gate $h$, and $z_n$ is an arbitrary binary string of length $n$, if the type is $\vee$, $\wedge$, or $\neg$. If the type is $x$, then $h$ is an input gate that corresponds to bit $g$ of the input. $\mathrm{bin}(\mathcal{A})$, of length $n$,  is the binary encoding of the input structure $\mathcal{A}$ on which $M_{\mathcal{C}}$ will compute the query (see \cite{Immerman99}). With $\hat{n}$ we denote the size of the domain of $\mathcal{A}$.
  Further, for every $m \geq 1$ we define $qAC^0_{m}$ as the subclass of $qAC^0$ of the families of circuits in $qAC^0$ where the path from  an input gate  to the output gate  with the maximum number of alternated gates of unbounded fan-in of type $\mathrm{AND}$ and $\mathrm{OR}$ in the circuits is $m$.
Following \cite{barrington:jcss1990} we assume that in all the circuits in the family the $\mathrm{NOT}$  gates can occur only at the second level from the left (i.e., immediately following input gates), the gates of unbounded fan-in at any given depth are all of the same type, the layers of such gates alternate in the two types, and the inputs to a gate in a given layer are always outputs of a gate in the previous layer.
Besides the $m$ alternated layers of only gates of unbounded fan-in, we have in each circuit and to the left of those layers a region of the circuit with only $\mathrm{AND}$ and $\mathrm{OR}$ gates of fan-in $2$, with an arbitrary layout, and to the left of that region a layer of some possible $\mathrm{NOT}$ gates and then a layer with the $n$ input gates.
It is straightforward to transform any arbitrary $qAC^0$ circuit into an equivalent one that satisfies such restrictions.
Let us denote as $\exists qAC^0_{m}$ ($\forall qAC^0_{m}$) the subclass of $qAC^0_{m}$ where the output gate is of type $\mathrm{OR}$ ($\mathrm{AND}$).

\begin{lemma}\label{a}
For all $m \ge 1$ we have that $\exists qAC^0_{m}$ $\subseteq$ $\tilde{\Sigma}^{\mathit{plog}}_m$, and $\forall qAC^0_{m}$ $\subseteq$ $\tilde{\Sigma}^{\mathit{plog}}_{m+1}$.


\end{lemma}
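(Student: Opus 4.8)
The plan is to build, for a fixed family $\mathcal{C} \in \exists qAC^0_m$ (respectively $\forall qAC^0_m$), a random-access alternating Turing machine $M$ that evaluates the circuit $C_{\hat n}$ on input $\mathrm{bin}(\mathcal{A})$ by descending from the output gate toward the inputs, translating each unbounded fan-in $\mathrm{OR}$ gate into an existential branching phase and each unbounded fan-in $\mathrm{AND}$ gate into a universal branching phase, following exactly the intuition recorded in the Related Work section. Since the family is fixed, the constants $m$ (the number of alternating unbounded layers), the overall depth $d = O(1)$, and the size exponent $c$ (so that the size is at most $2^{\log^c n}$ and every gate number fits in $\le \log^c n$ bits) are all hard-wired into the finite control of $M$.

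The core of the construction is a loop in which $M$ keeps the number $g$ of the current gate on a working tape, starting from the (uniformly identifiable) output gate, and reads its type $t$ from the connection language. Because $\mathrm{DCL}(\mathcal{C}) \in \mathrm{DTIME}[\log^{O(1)} n]$, I can let $M$ decide membership $(h,t,g,z_n) \in \mathrm{DCL}(\mathcal{C})$ deterministically within the polylog budget. At an unbounded $\mathrm{OR}$ gate $M$ existentially guesses the $\le \log^c n$ bits of a candidate child $h$, verifies via $\mathrm{DCL}$ that $h$ is a child, and sets $g := h$; at an unbounded $\mathrm{AND}$ gate it does the same but branches universally over all candidate $h$, accepting immediately on those $h$ that are not children. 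Each phase costs $\log^{O(1)} n$ steps, and since consecutive unbounded layers alternate in type, consecutive phases alternate between existential and universal, contributing exactly one alternation block per unbounded layer.

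Once $M$ reaches a gate of the bounded fan-in region lying between the inputs and the unbounded layers, I would switch to a purely deterministic evaluation that introduces \emph{no} further alternation. The structural observation I rely on is that this region has fan-in $2$ and, being part of a depth-$O(1)$ circuit, also depth $O(1)$; hence the subcircuit feeding any of its output gates has at most $2^{O(1)} = O(1)$ leaves, each an input gate or a $\mathrm{NOT}$-gated input gate. Thus $M$ traverses this constant-size subtree along a single path, reads the $O(1)$ relevant bits of $\mathrm{bin}(\mathcal{A})$ by the random-access mechanism (each in $O(\log n)$ steps via the address tape), and computes the constant Boolean function directly. Summing over the $O(1)$ levels, every computation path of $M$ halts within $\log^{O(1)} n$ steps.

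Finally I would count alternations and fix the starting mode. For $\mathcal{C} \in \exists qAC^0_m$ the output is $\mathrm{OR}$, so $M$ starts existentially and its phases realise the required bound $m$, giving $L(\mathcal{C}) \in \mathrm{ATIME}[\log^{O(1)} n, m] \subseteq \tilde{\Sigma}^{\mathit{plog}}_m$. For $\mathcal{C} \in \forall qAC^0_m$ the output is $\mathrm{AND}$, so the natural simulation starts universally; since a machine witnessing membership in $\tilde{\Sigma}^{\mathit{plog}}_{\cdot}$ must start in an existential state, I prepend a trivial existential phase (guessing and discarding one bit), raising the block count by one and yielding $L(\mathcal{C}) \in \mathrm{ATIME}[\log^{O(1)} n, m+1] \subseteq \tilde{\Sigma}^{\mathit{plog}}_{m+1}$. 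I expect the main obstacle, and the step deserving the most care, to be the deterministic treatment of the bounded fan-in region: one must argue that depth $O(1)$ together with fan-in $2$ forces each such gate to depend on only constantly many input bits, so that it can be evaluated on a single branch without spending an alternation. The time accounting (polylog per level, $O(1)$ levels) and the $+1$ bookkeeping for the $\forall$-output case then follow routinely.
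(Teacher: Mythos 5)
Your overall strategy coincides with the paper's: walk the circuit from the output gate toward the inputs, spend one existential (resp.\ universal) branching phase per unbounded fan-in $\mathrm{OR}$ (resp.\ $\mathrm{AND}$) layer, guess child gate numbers of length $\log^{O(1)}n$ and verify them with the polylog-time decision procedure for $\mathrm{DCL}(\mathcal{C})$, and pay one extra alternation block in the $\forall qAC^0_m$ case because the simulation must start existentially. The alternation count and the time accounting are handled the same way in both arguments.

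There is, however, one step where your argument as written would fail, and you correctly identified it as the delicate one: the claim that the bounded fan-in region can be evaluated \emph{purely deterministically}. The uniformity hypothesis only gives you a $\mathrm{DTIME}[\log^{O(1)}n]$ \emph{decision} procedure for tuples $(h,t,g,z_n)$; it does not let you \emph{compute} the children of a gate. Since gate numbers range over strings of length $\log^{O(1)}n$, there are $2^{\log^{O(1)}n}$ candidates, so a deterministic machine cannot enumerate the (two) children of a bounded fan-in gate within the polylog time budget, and hence cannot ``traverse the constant-size subtree along a single path'' as you propose. The paper's fix is to fold the identification of this region into the last alternation block: because the region has constant depth $w$, the machine guesses all $w$ gate numbers of the subtree \emph{in the quantifier mode of the $m$-th block}, verifies via $\mathrm{DCL}$ that they form the correct layout (rejecting a wrong guess if the block is existential, accepting it if universal), and only then evaluates the now fully known constant-size Boolean function deterministically from the $O(1)$ input bits read by random access. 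This costs no additional alternation, so your final counts ($m$ for $\exists qAC^0_m$, $m+1$ for $\forall qAC^0_m$) survive, but the repair is needed; the same remark applies to your assertion that the output gate is ``uniformly identifiable,'' which the paper instead handles by guessing it in the first existential block and verifying in the next block that it has no parent.
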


\begin{proof}[Proof (sketch)]
Let $\mathcal{C}$ be a family of circuits in $\exists qAC^0_{m}$ with the uniformity conditions given above. We build an alternating Turing machine $M_{\mathcal{C}}$ that computes the query computed by $\mathcal{C}$. We have a deterministic Turing machine that decides the $\mathrm{DCL}$ of $\mathcal{C}$ in time $\log^{c} n$, for some constant $c$. Then  for any given pair of gate numbers $g, h$, gate type $t$, and arbitrary string of $n$ bits $z_{n}$, we can deterministically check both $(h,t,g,z_n) \in \mathrm{DCL(\mathcal{C})}$ and $(h,t,g,z_n) \not\in \mathrm{DCL(\mathcal{C})}$. To be able to have a string $z_n$ of size $n$, we allow the input tape of $M_{\mathcal{C}}$ to be read/write, so that to compute those queries we write   $h,t,g,$ to the left of the input $\mathrm{bin}(\mathcal{A})$ in the input tape of the machine. Note that each gate number is $O(\log^{c'} n)$ bits long for some constant $c'$, and the type $t$ is $2$ bits long (which encodes the type in $\{\mathrm{AND},$ $\mathrm{OR},$ $\mathrm{NOT},$ $x\}$, where $x$ indicates that the gate is an input gate).

First $M_{\mathcal{C}}$ computes the size $n$ of the input $\mathrm{bin}(\mathcal{A})$, which can done in logarithmic time (see \cite{barrington:jcss1990}).

Corresponding to the $m$ alternated layers of  gates of unbounded fan-in of type $\mathrm{AND}$ and $\mathrm{OR}$, where the last layer consists of one single $\mathrm{OR}$ gate which is the output gate, we will have in $M_{\mathcal{C}}$ $m$ alternated blocks of existential and universal states, which will be executed in the opposite direction to the edge relation in the circuit, so that the first block, which is existential will correspond to layer $m$ in $\mathcal{C}$. Each such block  takes time $O(\log^{c'} n)$. In the following we will consider the layers from right to left.
\begin{itemize}
 \item
In the first block, which is \textbf{existential}, $M_{\mathcal{C}}$ guesses gate numbers $g_{o}, h_{2}$, and checks whether $(h_{2}, \wedge, g_{o}, z_n)$ $\in \mathrm{DCL(\mathcal{C})}$. If it is true, then $M_{\mathcal{C}}$ writes in a work tape the sequence $\langle g_{o}, h_{2}\rangle$.
Note that $h_{2}$ is of type $\wedge$. Then it checks
whether $(h_{2}, \wedge, g_{o},  z_n)$ $\not\in \mathrm{DCL(\mathcal{C})}$. If it is true then $M_{\mathcal{C}}$ rejects.

 \item
The second block is \textbf{universal}, and has two stages. In the first stage, $M_{\mathcal{C}}$ checks whether $g_{o}$ is the output gate. To that end it guesses a gate number $u$, and checks whether $(g_{o}, \vee, u,  z_n)$ $\in \mathrm{DCL(\mathcal{C})}$. If it is true, then $M_{\mathcal{C}}$ rejects.

In the second stage, $M_{\mathcal{C}}$ checks the inputs to gate $h_{2}$. It guesses a gate number $h_{3}$, and checks whether $(h_{3}, \vee,  h_{2}, z_n)$ $\in \mathrm{DCL(\mathcal{C})}$, in which case it adds $h_{3}$ at the right end of the sequence in the work tape. It then checks whether $(h_{3}, \vee, h_{2},  z_n)$ $\not\in \mathrm{DCL(\mathcal{C})}$, in which case it accepts.

 \item
The third block is \textbf{existential}. $M_{\mathcal{C}}$ checks the inputs to gate $h_{3}$. It guesses a gate number $h_{4}$, and checks whether
$(h_{4}, \wedge, h_{3}, z_n)$ $\in \mathrm{DCL(\mathcal{C})}$, in which case it adds $h_{4}$ at the right end of the sequence in the work tape. It then checks whether
$(h_{4}, \wedge, h_{3}, z_n)$ $\not\in \mathrm{DCL(\mathcal{C})}$, in which case it rejects.

 \item
Following the same alternating pattern, the $(m-1)$-th block will be existential or universal depending on the type of the gates in $\mathcal{C}$ at the $(m- 1)$-th layer.
Note that, in our progression from the output gate towards the input gates (right to left), the parents of $h_{m}$ (which is the gate number guessed at the $(m-1)$-th block of $M_{\mathcal{C}}$) are the first gates in the region of $\mathcal{C}$ of the arbitrary layout of only gates with bounded fan-in. As the depth of each circuit in the family $\mathcal{C}$ is constant, say it is $w$ for all the circuits in the family, in the $m$-th block $M_{\mathcal{C}}$ can guess the \textit{whole sub-circuit} of that region. Then it guesses $w$ gate  numbers and checks that they form exactly the layout of that region of the circuit. Once  $M_{\mathcal{C}}$ has guessed that layout it can work deterministically to evaluate it, up to the input gates, which takes time $O(1)$.

If the $(m-1)$-th block is \textbf{universal}, then if the guessed $w$ gate numbers do not form the correct layout, $M_{\mathcal{C}}$ accepts.
If the $(m-1)$-th block is \textbf{existential}, and the guessed $w$ gate numbers do not form the correct layout, then $M_{\mathcal{C}}$ rejects.


\end{itemize}


Note that if the family $\mathcal{C}$ is in $\forall qAC^0_{m}$, the first block of states is still existential, to guess the output gate. Then it works as in the  case of $\exists qAC^0_{m}$. 
\end{proof}

\begin{lemma}\label{b}
Let $t,k \ge 1$ and $\psi \in {\Sigma}^{1,\mathit{plog}}_t$ with first-order sub-formula $\varphi \in {\Sigma}^{0}_k$, and whose vocabulary includes the $\mathrm{BIT}$ predicate.
Then there is a family $\mathcal{C}_{\psi}$ of Boolean circuits in $\exists qAC^0_{t+k}$ that computes the Boolean query expressed by $\psi$.

\end{lemma}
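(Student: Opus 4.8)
The plan is to build $\mathcal{C}_\psi$ by a direct, layer-by-layer translation of the quantifier structure of $\psi$, treating each quantifier block as one layer of unbounded fan-in gates and collapsing the quantifier-free matrix of $\varphi$ into a constant-depth region of bounded fan-in gates near the inputs. Write $\psi = \exists \bar{X}_1 \forall \bar{X}_2 \cdots Q\bar{X}_t\, \varphi$ with $\varphi = \exists \bar{y}_1 \forall \bar{y}_2 \cdots Q'\bar{y}_k\, \theta$, where $\theta$ is quantifier-free. Since the syntax of $\mathrm{SO}^{\mathit{plog}}$ keeps formulae in negation normal form, every $\theta$ is a positive Boolean combination of atoms and negated atoms, so the NOT gates can all be placed immediately above the input gates. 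Each existential block (second-order or first-order) becomes a layer of OR gates and each universal block a layer of AND gates, following the prenex alternation; reading the prefix from the innermost block outward yields an output OR gate, since both $\psi$ and $\varphi$ begin existentially, so the family lands in $\exists qAC^0$. The number of alternating unbounded fan-in layers along any input-to-output path is at most $t+k$ (when the last second-order block and the first first-order block coincide in type they merge), and one may pad with trivial single-child layers to make it exactly $t+k$, placing $\mathcal{C}_\psi$ in $\exists qAC^0_{t+k}$.

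Next I would fix the gate addressing scheme, which is the crux of the uniformity argument. A gate is named by the sequence of witnesses chosen for the quantifier blocks lying above it: a second-order block $\exists/\forall\,\bar{X}^{r,\log^j}$ contributes an encoding of a relation of size at most $\lceil\log n\rceil^j$, written as a list of at most $\lceil\log n\rceil^j$ tuples of $r$ domain elements, each element in $\lceil\log n\rceil$ bits, for $O(\log^{j+1} n)$ bits total; a first-order existential block $\exists\bar{y}$ contributes a tuple of domain elements; and a guarded universal block $\forall y\,(X(\bar{y})\to\cdots)$ contributes an index into the already-fixed, listed tuples of $X$. Concatenating these over the constantly many blocks gives gate names of length $\log^{O(1)} n$, so the circuit has size $2^{\log^{O(1)} n}$ as required. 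The fan-in of a gate for a first-order existential block is $n$ (or $n^s$ for $s$ variables), for a second-order block it is the number of admissible relations $2^{O(\log^{j+1} n)}$, and for a guarded universal block it is bounded by $|X|\le\lceil\log n\rceil^j$; all are legal.

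With the naming fixed, I would verify DCL-uniformity. Given $(h,t,g,z_n)$ one parses $h$ and $g$ as blocks of witness-encodings and checks that $g$ extends $h$ by exactly one further block's worth of choices and that the gate types alternate correctly; this is polylog-time string manipulation. At the leaves, after all variables are fixed, each atom of $\theta$ is evaluated in the bounded fan-in region: a second-order atom $X(\bar{a})$ is decided by looking up $\bar{a}$ among the listed tuples of $X$ recorded in the gate name (a constant, wired directly); an order, $\mathrm{BIT}$, successor or equality atom is computed from the fixed values; and a structure atom $R_i(\bar{a})$ is routed to the unique input bit of $\mathrm{bin}(\mathbf{A})$ encoding membership of $\bar{a}$ in $R_i^{\mathbf{A}}$. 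Computing the index of that input bit from $\bar{a}$ is exactly the arithmetic on the binary encoding used in the proof of Theorem~\ref{pedsoplog} (and is where $\mathrm{BIT}$ is needed) and is polylog-time computable. The main obstacle, as in the companion Lemma~\ref{a}, is precisely this uniformity bookkeeping: one must show that parsing gate names, checking the alternation and wiring of the unbounded fan-in layers, laying out the constant-size matrix in the bounded fan-in region, and locating the correct input bit for each structure atom can all be performed by a single $\mathrm{DTIME}[\log^{O(1)} n]$ machine, so that $(h,t,g,z_n)\in\mathrm{DCL}(\mathcal{C}_\psi)$ is decidable within the $qAC^0$ uniformity budget. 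Correctness then follows by a routine induction on the quantifier prefix, matching each OR/AND gate to the corresponding $\exists/\forall$ and using that the witness recorded in a gate's name is exactly a valuation of the quantified variable.
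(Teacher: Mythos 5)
Your construction is essentially the one the paper uses: a canonical layered circuit whose unbounded fan-in layers mirror the $t$ second-order and $k$ first-order quantifier blocks, with the quantifier-free matrix realised as a constant-depth bounded fan-in region, and with DCL-uniformity obtained by packing the witness values for all enclosing quantifier blocks into the gate numbers so that membership in $\mathrm{DCL}(\mathcal{C}_\psi)$ is decidable by polylog-time string manipulation. Your additional remarks (handling the guarded universal quantifier as a polylog fan-in layer, merging/padding layers to hit exactly $t+k$, and locating the input bit for a structure atom) only make explicit details the paper's sketch glosses over, so the two proofs coincide in substance.
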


\begin{proof}[Proof (sketch)]

We essentially follow the sketch of the proof of the theorem in Section 3, page 89 of \cite{Barr92}, where it shows that for a given $\psi \in $ $\mathrm{SO}^{\mathit{r}}$ there is an equivalent circuit family $\mathcal{C}_{\psi}$ in $qAC^0$. But we use a simple strategy to define a layout of the circuits which will preserve the number of alternated blocks of quantifiers in $\psi$ and $\varphi$.

 Starting from the canonical $AC^0$ circuit corresponding to an first-order formula in prenex, as in Theorem 9.1 in \cite{barrington:jcss1990}, we follow the same idea extending such circuit $\mathcal{C}_{\varphi}$ for $\varphi \in {\Sigma}^{0}_k$, to a canonical circuit $\mathcal{C}_{\psi}$ in $\exists qAC^0$
for $\psi \in {\Sigma}^{1,\mathit{plog}}_t$, in such a way that $\mathcal{C}_{\psi}$ is in $\exists qAC^0_{t+k}$.

The layout of $\mathcal{C}_{\varphi}$ basically follows from left to right the opposite order of the formula $\varphi$, so that the output gate is an unbounded fan-in $\vee$ gate that corresponds to the first-order quantifier $\exists_1$, the inputs to that gate are the outputs of a layer of unbounded fan-in $\wedge$ gates that correspond to the first-order quantifier $\forall_2$, and so on. To the left of the leftmost layer of unbounded fan-in gates corresponding to the quantifier $Q_k$, there is a constant size, constant depth region of the circuit which corresponds to the quantifier free sub-formula of $\varphi$. This part has the input gates, constants, $\mathrm{NOT}$ gates, and  $\mathrm{AND}$ and $\mathrm{OR}$ gates of fan-in $2$.

In a similar way, we extend $\mathcal{C}_{\varphi}$ to the right, to get $\mathcal{C}_{\psi}$. To that end, to the right of the first-order quantifier $\exists_1$ (which in $\mathcal{C}_{\psi}$ becomes a \textit{layer} of unbounded fan-in $\vee$ gates), we will have in $\mathcal{C}_{\psi}$ one layer of  gates of unbounded fan-in for each $SO$ quantifier: of $\vee$ gates for existential $SO$ quantifiers, and $\wedge$ gates for universal $SO$ quantifiers. The first added layer corresponds to the $SO$ quantifier $Q_t$, and the rightmost layer will correspond the $SO$ quantifier $\exists_1$, which is a layer of one single gate, that becomes the new output gate.

Clearly, by following the construction above we get a family of circuits in $\exists qAC^0_{t+k}$.

To build the Turing machine $M_{\mathcal{C}}$ $\in\mathrm{DTIME} [\log^{O(1)} n]\;$ that decides the language $\mathrm{DCL}(\mathcal{C})$, we use the same kind of encoding sketched in \cite{Barr92} (which in turn is an extension of the one used in \cite{barrington:jcss1990} for $AC^{0}$).
In the number of the gates we encode all the information that we need to decide the language, while still keeping its length polylogarithmic.
Each such gate number will have different sections: i) type of the gate; ii) a sequence of $k$ fields of polylogarithmic size each to hold the values of the first-order variables; iii) a sequence of $t$ fields of polylogarithmic size each to hold the values of of the $SO$ variables; iv) a constant size field for the code of the $\mathrm{NOT}$, $\mathrm{OR}$ and $\mathrm{AND}$ gates of fan-in $1, 2, 2$, respectively, in the region of $\mathcal{C}_{\psi}$ corresponding to the quantifier-free part of the formula $\varphi$; v) a logarithmic size field for the bit number that corresponds to an input gate, vi) a polylogarithmic size field for the number of the gate whose output is the left (or only) input to the gate; and vii) idem for the right input.

The idea is that for any given gate, its number will hold the values of all the first-order and $SO$ variables that are bounded in ${\psi}$, in the position of the formula that corresponds, by the construction above, to that gate, or zeroes if the variable is free. Note that each sequence of all those values uniquely define a path in $\mathcal{C}_{\psi}$ from the output gate to the given gate. For the gates which correspond to quantifiers, and for those in the quantifier free part that are parents of them, the encoding allows to easily compute the number of their child gates.
The gates for the quantifier free part hold in their numbers (iv) a number which uniquely identifies that gate in that region of the circuit for a particular branch in $\mathcal{C}_{\psi}$ which is given by the values of the bounded variables. Note that the layout of each such branch of the circuit is constant and hence \textit{stored} in the transition function of $M_{\mathcal{C}}$, and it can evaluate that sub circuit in polylogarithmic time.
Note that the predicate $\mathrm{BIT}(i,j)$ can be evaluated by $M_{\mathcal{C}}$ by counting in binary in a work tape up to $j$ and then looking at its bit $i$.

In this way, clearly $M_{\mathcal{C}}$ decides $\mathrm{DCL}(\mathcal{C})$ in time $\log^{O(1)} n$. 
\end{proof}

\section{The Normal Form for \texorpdfstring{$\mathrm{SO}^{\mathit{plog}}$}{TEXT}}\label{app-snf}

We show the proof of Lemma \ref{lem-snf}

\begin{lemma}[Lemma \ref{lem-snf}]
For every $\mathrm{SO}^{\mathit{plog}}$-formula $\varphi$, there is an equivalent $\mathrm{SO}^{\mathit{plog}}$-formula $\varphi'$ that is in SNF.

\end{lemma}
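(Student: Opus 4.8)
The plan is to prove the statement by structural induction on $\varphi$, at each step pulling every second-order quantifier in front of the remaining (first-order, hence second-order-quantifier-free) matrix. A feature of $\mathrm{SO}^{\mathit{plog}}$ that makes this smoother than classical prenexing is that negation occurs only in front of second-order atoms (rule ii): there is no general negation to push inward and no need to dualize quantifiers. The base cases (rules i and ii) are second-order-quantifier-free and are already matrices; to match the formal shape of $\Sigma^{\mathit{plog}}_1$ one may prepend a vacuous quantifier $\exists X^{1,\log^0}$ over a fresh, non-occurring variable. For a conjunction or disjunction $\alpha \circ \beta$ with $\circ \in \{\wedge,\vee\}$, I would apply the induction hypothesis to obtain prenex equivalents, rename their bound second-order variables apart, and then repeatedly use the equivalences $(\exists X\,\gamma)\circ\delta \equiv \exists X(\gamma\circ\delta)$ and $(\forall X\,\gamma)\circ\delta \equiv \forall X(\gamma\circ\delta)$ (valid since $X$ is not free in $\delta$) to extract all second-order quantifiers; grouping maximal runs of like quantifiers into blocks yields an SNF formula. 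The case of a leading second-order quantifier (rule vi) is immediate: prepend it and merge it into the first block of the prenex form of its argument, or open a new block.

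The first genuine difficulty is the first-order existential quantifier (rule v). Given $\exists x\,\varphi$, I would first prenex $\varphi$; if the result is second-order-quantifier-free we are done, but in general it carries a second-order prefix that must be moved in front of $\exists x$, and here $\exists x\,\forall X\,\chi$ is \emph{not} equivalent to $\forall X\,\exists x\,\chi$. The remedy is to absorb the first-order witness into a second-order existential over a singleton, using
\[
\exists x\,\varphi \;\equiv\; \exists X_0^{1,\log^0}\big(\exists x\,X_0(x) \,\wedge\, \forall x\,(X_0(x)\rightarrow\varphi)\big),
\]
which is correct because a valuation of $X_0^{1,\log^0}$ has at most one tuple, so the nonemptiness conjunct $\exists x\,X_0(x)$ forces $X_0$ to be an exact singleton. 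The surviving first-order existential $\exists x\,X_0(x)$ contains no second-order quantifier and stays in the matrix, while $\varphi$ now sits under a bounded universal quantifier, reducing this case to the bounded-universal case treated next, after which the outer $\exists X_0$ is pulled to the front.

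The main obstacle is rule iv, the bounded universal quantifier $\forall\bar{x}\,(X(\bar{x})\rightarrow\varphi)$. Writing the prenex form of $\varphi$ as $Q_1 Y_1 \cdots Q_\ell Y_\ell\,\psi$, I would extract the quantifiers one at a time. A universal quantifier commutes outward for free, $\forall\bar{x}\,(X(\bar{x})\rightarrow\forall Y\,\theta)\equiv\forall Y\,\forall\bar{x}\,(X(\bar{x})\rightarrow\theta)$ (renaming $Y$ apart from $X$). An existential quantifier requires Skolemization: since $X$ has some exponent $e$, it contains at most $\log^e n$ tuples, so the witnesses for $\exists Y^{r,\log^k}$ over the tuples $\bar{x}\in X$ can be gathered into a single relation $Y'$ of arity $|\bar{x}|+r$ and exponent $e+k$, giving
\[
\forall\bar{x}\,(X(\bar{x})\rightarrow\exists Y^{r,\log^k}\theta)\;\equiv\;\exists {Y'}^{\,|\bar{x}|+r,\,\log^{e+k}}\big(\mathrm{Sl}(Y',X)\,\wedge\,\forall\bar{x}\,(X(\bar{x})\rightarrow\theta[\,Y(\bar{t})/Y'(\bar{x},\bar{t})\,])\big),
\]
where the substitution replaces each literal $Y(\bar{t})$, $\neg Y(\bar{t})$ by $Y'(\bar{x},\bar{t})$, $\neg Y'(\bar{x},\bar{t})$. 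The subtle point, and the one I expect to cost the most care, is the guard $\mathrm{Sl}(Y',X)$: without it the backward direction fails, because a slice $\{\bar{t} : Y'(\bar{x},\bar{t})\}$ of the gathered relation may contain up to $\log^{e+k}n$ tuples and thus be too large to be a legitimate value of the exponent-$k$ variable $Y$, so a large slice could satisfy $\theta$ when no admissible small $Y$ does. I would therefore let $\mathrm{Sl}(Y',X)$ assert that every slice over $\bar{x}\in X$ has at most $\log^k n$ tuples --- an $\mathrm{SO}^{\mathit{plog}}$-expressible condition built from an injection into a reference relation of size $\log^k n$ (e.g.\ $\mathrm{DEF}_k$), in the style of the cardinality gadget $|X|\leq|Y|$ of Section~\ref{sec:examples2}. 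With this guard, the forward direction holds by collecting the per-tuple witnesses (whose slices are small by construction), and the backward direction holds by reading each admissible small slice back as a witness for $Y$. Iterating the extraction across all $\ell$ quantifiers, prenexing the guards $\mathrm{Sl}$ (which contribute their own second-order quantifiers to the prefix and their matrices to $\psi$) by the same induction, and finally merging like blocks yields the required SNF formula; the remaining bounded universal $\forall\bar{x}\,(X(\bar{x})\rightarrow\cdots)$ is legitimately part of the second-order-quantifier-free matrix.
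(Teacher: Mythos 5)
Your proof is correct in its essentials, but it takes a genuinely different route from the paper's at the decisive step. The paper first brings \emph{all} quantifiers, first- and second-order alike, to the front in alternating blocks, commutes second-order quantifiers ahead of first-order ones inside each block, and then removes the remaining first-order quantifiers from the prefix via the singleton equivalences $\exists x \forall Y^{\log^k}\psi \equiv \exists X^{\log^0}\forall Y^{\log^k}\exists x(X^{\log^0}(x)\wedge\psi)$ and $\forall x\,\exists Y^{\log^k}\psi \equiv \forall X^{\log^0}\exists Y^{\log^k}\forall x(X^{\log^0}(x)\rightarrow\psi)$: each first-order quantifier is traded for an exponent-$0$ second-order quantifier of the \emph{same} polarity and the actual first-order quantifier is pushed innermost. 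Your treatment of $\exists x$ is that same singleton trick in different clothing, but your treatment of the bounded universal is genuinely different: rather than letting an inner $\exists Y$ escape past $\forall\bar x(X(\bar x)\rightarrow\cdot)$ by converting the latter into a universal second-order quantifier of exponent $0$, you Skolemize, gathering the per-tuple witnesses into a relation of arity $|\bar x|+r$ and exponent $e+k$ guarded by the slice-size condition $\mathrm{Sl}$ --- whose necessity for the backward direction you correctly identify, and whose expressibility via the $|X|\le|Y|$ gadget of Section~\ref{sec:examples2} is routine. What each approach buys: the paper's is shorter and never inflates arities or exponents; yours never introduces a second-order quantifier of the opposite polarity, so it preserves the alternation structure --- e.g.\ $\exists Z\,\forall\bar x(Z(\bar x)\rightarrow\exists Y\,\theta)$ lands in $\Sigma^{\mathit{plog}}_1$ under your procedure but in $\Sigma^{\mathit{plog}}_3$ under the paper's, a refinement that Lemma~\ref{lem-snf} itself does not demand. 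Two small loose ends you should tie up: after the substitution $Y(\bar t)\mapsto Y'(\bar x,\bar t)$, an occurrence of $Y$ as the guard of a bounded universal $\forall\bar z(Y(\bar z)\rightarrow\chi)$ becomes $\forall\bar z(Y'(\bar x,\bar z)\rightarrow\chi)$, which is not literally an instance of formation rule~(iv) because the guard atom carries the parameters $\bar x$ (repair it by quantifying the full tuple and adding an equality test); and the extraction equivalences for $\wedge$ and $\vee$ over second-order quantifiers rely on the range of every second-order variable being nonempty, which holds since the empty relation is always an admissible value, but deserves a word.
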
 

\begin{proof}

An easy induction using renaming of variables and equivalences such as $(\neg \exists X^{\log^k} \varphi)$ $\equiv \forall X^{\log^k} (\neg \varphi)$ and $(\phi \vee
\exists x \psi) \equiv \exists x (\phi \vee \psi)$ if $x$ is not free in $\phi$, shows that each $\mathrm{SO}^{\mathit{plog}}$-formula is logically equivalent to an $\mathrm{SO}^{\mathit{plog}}$-formula in \emph{prenex normal form}, i.e., to a formula where all first- and second-order quantifiers are grouped together at the front, forming alternating blocks of consecutive existential or universal quantifiers. Yet the problem is that first- and second-order quantifiers might be mixed. Among the quantifiers of a same block, though, it is clearly possible to commute them 
so as to get those of second-order at the beginning of the block. But, we certainly cannot commute different quantifiers without altering the meaning of the formula. What we can do is to replace first-order quantifiers by second-order quantifiers so that all quantifiers at the beginning of the formula are of second-order, and they are then eventually followed by first-order quantifiers. This can be done using the following equivalences:
\[ \exists x \forall Y^{\log^k} \psi \equiv \exists X^{\log^0} \forall Y^{\log^k} \exists x (X^{\log^0}(x) \wedge \psi).\]
\[ \forall x \exists Y^{\log^k} \psi \equiv \forall X^{\log^0} \exists Y^{\log^k} \forall x (X^{\log^0}(x) \rightarrow \psi). \]
\end{proof}

\section{NODNFSAT Query}\label{appendix:NODNFSAT}

The following sentence expresses the NODNFSAT query described in Section \ref{sec:examples2}. For clarity, we write the query using an unbounded first order universal quantifier. Nevertheless, we can rewrite it as a $\Pi^{\mathit{plog}}_2$ sentence by simply replacing the universal first order quantifiers by a second quantifier as explained in Remark~\ref{r1}.
We use $H$ as a variable with arity $2$ and exponent $2$, and assume the alphabet of input formula in DNF as $\{(,),\wedge,\vee,\neg,0,1,X\}$ with $\sigma=<,I_(,I_),I_\wedge,I_\vee, I_\neg, I_0,I_1,I_X>$, same encoding as in \cite{Cook_71}.

\begin{align*}
    \forall x_b x_c {\Bigg(}_0& \neg I_{(}(x_b) \vee \neg I_{)}(x_c) \vee \exists y \Bigg[_1 y>x_b \wedge y< x_c \wedge (I_{(}(y)\wedge I_{)}(y))\Bigg]_1 \vee\\
    & \exists H\exists x_ax'_ax_{aa}x'_{aa}x_fx'_f\forall xx'yy'\Bigg[_1 \bigg(_2 \Big(_3 H(x,x')\wedge H(y,y')\Big)_3\to\\
    &\bigg[_3\Big(_4 (x_b \leq x \wedge x \leq x_c) \wedge (x_b \leq x' \wedge x' \leq x_c) \wedge (x_b \leq y \wedge y \leq x_c) \wedge\\
    &\text {\ \ \ \ \ } 
    (x_b \leq y' \wedge y' \leq x_c)\Big)_4 \wedge
    \Big(_4 x=y{\to} x'=y'\Big)_4 \wedge
    \Big(_4 x'=y'{\to} x=y\Big)_4 \\
    &
    \wedge \Big(_4 
    x=x_a\vee x=x_{aa} \vee x=x_f\vee \big(_5 I_0(x)\wedge I_0(x')\big)_5
    \Big)_4\\
    &
    \wedge \Big(_4 
    x'=x'_a\vee x'=x'_{aa} \vee x'=x'_f\vee \big(_5 I_1(x)\wedge I_1(x')\big)_5
    \Big)_4\\
    &\wedge \Big(_4 SUCC(x,y){\leftrightarrow} SUCC(x',y')\Big)_4\wedge\\
    &
    \Big(_4
    \forall zz'\big(_5 (_6 H(z,z')\wedge(I_0(z)\vee I_1(z))\wedge(I_0(x)\vee I_1(x)))_6{\to} x\leq z\big)_5 \leftrightarrow\\
    &
    \forall vv'\big(_5 (_6 H(v,v')\wedge(I_0(v')\vee I_1(v'))\wedge(I_0(x')\vee I_1(x')))_6{\to} x'\leq v'\big)_5 
    \Big)_4 \wedge\\
    &\Big[_4 
    \big[_5 
        \forall zz'\big(_5 (_6 H(z,z')\wedge(I_0(z)\vee I_1(z))\wedge(I_0(x)\vee I_1(x)))_6{\to} x\leq z\big)_5 \wedge\\
        &
        SUCC(x_a,x){\wedge}SUCC(x_{aa},x_a){\wedge}SUCC(x'_a,x'){\wedge} SUCC(x'_{aa},x'_a)
    \big]_5\to\\
    &
    \big[_5
    I_X(x_a)\wedge I_X(x'_a)\wedge \big(_5 I_((x_{aa})\vee I_\wedge(x_{aa})\vee I_\neg(x_{aa})\big)_5 \wedge \big(_5 I_((x'_{aa})\vee \\
    &\text{\ \ }
    I_\wedge(x'_{aa}) \vee I_\neg(x'_{aa})\big)_5 \wedge \big(_5 I_\neg(x_{aa})\leftrightarrow I_\wedge(x'_{aa})\big)_5
    \big]_5 \Big]_4 \wedge\\
    &
    \Big(_4
    \forall zz'\big(_5 (_6 H(z,z')\wedge(I_0(z)\vee I_1(z))\wedge(I_0(x)\vee I_1(x)))_6{\to} z\leq x\big)_5 \leftrightarrow\\
    &
    \forall vv'\big(_5 (_6 H(v,v')\wedge(I_0(v')\vee I_1(v'))\wedge(I_0(x')\vee I_1(x')))_6{\to} v'\leq x'\big)_5 
    \Big)_4 \wedge\\
    &\Big[_4 
    \big[_5 
        \forall zz'\big(_5 (_6 H(z,z')\wedge(I_0(z)\vee I_1(z))\wedge(I_0(x)\vee I_1(x)))_6{\to} z\leq x\big)_5 \wedge\\
        &
        SUCC(x,x_f)\wedge SUCC(x',x'_f)
    \big]_5\to
    \big[_5
        \big(_5 I_)(x_f)\vee I_\wedge(x_f)\big)_5\wedge
        \big(_5 I_)(x'_f)\vee \\
        &
        \text{\ \ \ }
        I_\wedge(x'_f)) \big)_5
    \big]_5 \Big]_4 \bigg]_3 \bigg)_2 \Bigg]_1 {\Bigg)}_0
\end{align*}

\section{Details of Bounded Binary Arithmetic in \texorpdfstring{$\mathrm{SO}^{\mathit{plog}}$}{TEXT} }\label{app-arithmetics}

We describe next the sub-formulae $\varphi$, $\psi$ and~$\alpha$ of $\mathrm{BSUM}_k$ (please, refer to formula~(\ref{arith8}) in the main text).

The sub-formulae $\varphi(X, Y, Z)$ is satisfied if the value of the least significant bit of $Z$ is consistent with the sum of the least significant bits of $X$ and $Y$.  
\begin{align*}
\varphi(X, Y, Z) \equiv &
\big( Z(\bar{0},0) \wedge ( (X(\bar{0},0) \wedge Y(\bar{0},0)) \vee (X(\bar{0},1) \wedge Y(\bar{0},1)) ) \big) \vee\\
&\big( Z(\bar{0},1) \wedge ( (X(\bar{0},1) \wedge Y(\bar{0},0)) \vee (X(\bar{0},0) \wedge Y(\bar{0},1)) ) \big)
\end{align*} 

The sub-formulae $\psi(\bar{x}, \bar{y}, W, X, Y)$ is satisfied if the value of the bit in position $\bar{x}$ of $W$ (i.e., the value of the carried bit), is consistent with the sum of the values of the bits in position $\bar{y}$ (i.e., the position preceding $\bar{x}$) of $W$, $X$ and $Y$.  
\begin{align*}
\psi(&\bar{x}, \bar{y}, W, X, Y) \equiv \\
&\big( W(\bar{x},0) \wedge ( (W(\bar{y},0)\wedge X(\bar{y},0)\wedge Y(\bar{y},0)) \vee (W(\bar{y},0)\wedge X(\bar{y},0)\wedge Y(\bar{y},1))\vee\\
&\hspace*{1.9cm} (W(\bar{y},0)\wedge X(\bar{y},1)\wedge Y(\bar{y},0)) \vee (W(\bar{y},1)\wedge X(\bar{y},0)\wedge Y(\bar{y},0))) \big) \vee \\
&\big( W(\bar{x},1) \wedge ( (W(\bar{y},1)\wedge X(\bar{y},1)\wedge Y(\bar{y},0)) \vee (W(\bar{y},1)\wedge X(\bar{y},0)\wedge Y(\bar{y},1))\vee\\
&\hspace*{1.9cm} (W(\bar{y},0)\wedge X(\bar{y},1)\wedge Y(\bar{y},1)) \vee (W(\bar{y},1)\wedge X(\bar{y},1)\wedge Y(\bar{y},1))) \big) 
\end{align*} 

Finally, $\alpha(\bar{x}, W, X, Y, Z)$ is satisfied if the value of the bit in position $\bar{x}$ of $Z$ is consistent with the sum of the corresponding bit values of $W$, $X$ and $Z$.  
\begin{align*}
\alpha(&\bar{x}, W, X, Y, Z) \equiv \\
&\big( Z(\bar{x},0) \wedge ( (W(\bar{x},0)\wedge X(\bar{x},0)\wedge Y(\bar{x},0)) \vee (W(\bar{x},0)\wedge X(\bar{x},1)\wedge Y(\bar{x},1))\vee\\
&\hspace*{1.8cm} (W(\bar{x},1)\wedge X(\bar{x},1)\wedge Y(\bar{x},0)) \vee (W(\bar{x},1)\wedge X(\bar{x},0)\wedge Y(\bar{x},1))) \big) \vee \\
&\big( Z(\bar{x},1) \wedge ( (W(\bar{x},0)\wedge X(\bar{x},0)\wedge Y(\bar{x},1)) \vee (W(\bar{x},0)\wedge X(\bar{x},1)\wedge Y(\bar{x},0))\vee\\
&\hspace*{1.8cm} (W(\bar{x},1)\wedge X(\bar{x},0)\wedge Y(\bar{x},0)) \vee (W(\bar{x},1)\wedge X(\bar{x},1)\wedge Y(\bar{x},1))) \big)
\end{align*}

In the following formula expressing $\mathrm{BMULT}_k(X,Y,Z)$, the variable $I$ has arity $k$ and exponent $k$. $I'$ is of arity $2k$ and exponent $2k$. The remaining second-order variables $R$, $S$ and $W$ are of arity $2k + 1$ and exponent $2k$.
\begin{align}
\mathrm{BMULT}_k & (X,Y,Z) \equiv \notag \\
\exists I I' & R S W \big(\mathrm{DEF}_k(I) \wedge \mathrm{BIN}_k(X,I) \wedge \mathrm{BIN}_k(Y,I) \wedge \mathrm{BIN}_k(Z,I) \wedge \notag \\
&\mathrm{DEF}_{2k}(I') \wedge \mathrm{BIN}_{2k}(R,I') \wedge \mathrm{BIN}_{2k}(S,I') \wedge \mathrm{BIN}_{2k}(W,I') \wedge \notag \\
&\mathrm{SHIFT}(S, X, I) \wedge \notag \\
&\forall \bar{x} (I(\bar{x}) \to ( (\bar{x} = \bar{0} \wedge Y(\bar{x},0) \wedge \varphi_a(R, \bar{x})) \vee \notag \\
&\hspace*{1.5cm} (\bar{x} = \bar{0} \wedge Y(\bar{x},1) \wedge \varphi_b(R, \bar{x}, X)) \vee \notag \\
&\hspace*{1.5cm} (Y(\bar{x},0) \wedge \exists \bar{y} (SUCC_k(\bar{y},\bar{x}) \wedge \varphi_c(R, \bar{x}, \bar{y}))) \vee \notag \\
&\hspace*{1.5cm} (Y(\bar{x},1) \wedge \exists \bar{y} (SUCC_k(\bar{y},\bar{x}) \wedge \varphi_d(R, S, W, \bar{x}, \bar{y})))))\big) \label{arith9}
\end{align}
The sub-formula $\varphi_a(R, \bar{x}) \equiv \forall \bar{y} (I(\bar{y}) \to R(\bar{x}, \bar{y}, 0))$  expresses that $R|_{\bar{x}}$ encodes the binary number $0$, the sub-formula $\varphi_b(R, \bar{x}, X) \equiv \forall \bar{y} (I(\bar{y}) \to \exists z(R(\bar{x}, \bar{y}, z) \wedge X(\bar{y}, z)))$ expresses that $R|_{\bar{x}} = X$, the sub-formula $\varphi_c(R, \bar{x}, \bar{y}) \equiv \forall \bar{w} (I(\bar{w}) \to \exists z(R(\bar{x}, \bar{w}, z) \wedge R(\bar{y}, \bar{w}, z)))$ expresses that $R|_{\bar{x}} = R|_{\bar{y}}$, and the sub-formula $\mathrm{SHIFT}(S, X, I)$ expresses that if $\bar{a} \in B^k$ is the $i$-th tuple in the numerical order of $B^k$, then $S|_{\bar{a}}$ is the $(i-1)$-bits arithmetic left-shift of $X$, i.e., $S|_{\bar{a}}$ is $X$ multiplied by $2^{i-1}$ in binary. We have
\begin{align}
\mathrm{SHIFT}(S, X&, I) \equiv \exists \bar{x} \bar{y} \big(\mathrm{SUCC}_k(\bar{x}, \overline{\mathit{logn}}) \wedge \mathrm{SUCC}_k(\bar{y}, \bar{x}) \wedge S(\bar{y}, \bar{x}, 0) \big) \notag \\
\forall \bar{x} \big(I(\bar{x}) \to &( (\bar{x} = \bar{0} \wedge \varphi_b(S,\bar{x}, X)) \vee \notag \\
&\hspace*{0.2cm} \exists \bar{y} (\mathrm{SUCC}_k(\bar{y},\bar{x}) \wedge \notag \\
&\hspace*{0.8cm} \forall \bar{z} (I(\bar{z}) \to ((\bar{z} = \bar{0} \wedge S(\bar{x}, \bar{z}, 0)) \vee \notag \\
&\hspace*{1.7cm} \exists \bar{z}' b (\mathrm{SUCC}_k(\bar{z}',\bar{z}) \wedge S(\bar{y}, \bar{z}', b) \wedge S(\bar{x}, \bar{z}, b))))))\big)   \label{arith10}  
\end{align}
Finally, the sub-formula $\varphi_d(R, S, W, \bar{x}, \bar{y})$ expresses that $R|_{\bar{x}}$ results from adding $R|_{\bar{y}}$ to $S|_{\bar{x}}$. The carried digits of this sum are are kept in $W|_{\bar{x}}$. Given the formula $\mathrm{BSUM}_k$ described earlier, it is a straightforward task to write $\varphi_d$. We omit further details.  

$\mathrm{BDIV}_k(X, Y, Z, M)$ can be written as follows.
 \begin{align}
    \mathrm{BDIV}_k(X, Y, Z, M)\equiv \notag \\
\exists I I' A R S W W' \big(&\mathrm{DEF}_k(I) \wedge \mathrm{BIN}_k(X,I) \wedge \mathrm{BIN}_k(Y,I) \wedge \mathrm{BIN}_k(Z,I) \wedge \notag \\
    &\mathrm{BIN}_k(M,I) \wedge \mathrm{BIN}_k(A,I) \wedge \neg \mathrm{BNUM}_k(Y,0,I) \wedge \notag \\
    &{<_k}(M, Y, I) \wedge \mathrm{BMULT}_k(Z, Y, A, I, I', R, S, W) \wedge \notag \\
    &\mathrm{BSUM}_k(A,M,X, I, W')\big). \label{arith11}
\end{align}

Where in the previous formula $\mathrm{BMULT}_k(Z, Y, A, I, I', R, S, W)$ denotes the formula obtained from $\mathrm{BMULT}_k(X, Y, Z)$ by eliminating the second-order quantifiers (so that $I$, $I'$, $R$, $S$ and $W$ become free-variables) and by renaming $X$ and $Z$ as $Z$ and $A$, respectively. Likewise, $\mathrm{BSUM}_k(A,M,X, I, W')$ denotes the formula obtained from $\mathrm{BSUM}_k(X, Y, Z)$ by eliminating the second-order quantifiers (so that $I$ and $W$ become free-variables) and by renaming $X$, $Y$, $Z$ and $W$ as $A$, $M$, $X$ and $W'$, respectively.

\section{Sketches of the Proofs of Theorems \ref{pedsoplog2} and \ref{pedsoplog3}}\label{app-plh}

\begin{theorem}[Theorem \ref{pedsoplog2}]
Over ordered structures with sucessor relation, $\mathrm{BIT}$ and constants for $\log n$, the minimum, second and maximum elements, $\Pi^{\mathit{plog}}_1$ captures $\tilde{\Pi}^{\mathit{plog}}_1$.
\end{theorem}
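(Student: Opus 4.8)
The plan is to establish the two inclusions $\Pi^{\mathit{plog}}_1 \subseteq \tilde{\Pi}^{\mathit{plog}}_1$ and $\tilde{\Pi}^{\mathit{plog}}_1 \subseteq \Pi^{\mathit{plog}}_1$ by mirroring the two parts of the proof of Theorem~\ref{pedsoplog}, systematically exchanging the roles of existential and universal quantification. Because $\mathrm{SO}^{\mathit{plog}}$ is \emph{not} closed under negation, the result cannot be obtained as a one-line corollary of Theorem~\ref{pedsoplog} by complementation; instead I would re-run both constructions, taking care that the single block of universal second-order quantifiers stays intact and that the second-order-quantifier-free first-order matrix is handled exactly as before.

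For $\tilde{\Pi}^{\mathit{plog}}_1 \subseteq \Pi^{\mathit{plog}}_1$ (the machine-to-formula direction, dual to Part~b) I would reuse the formulae~(\ref{eq-2})--(\ref{eq-9}) describing a step-by-step computation of a random-access machine $\mathbf{M}$, together with the bounded arithmetic of Section~\ref{sec:examples}. The only change concerns how the second-order variables $C, T_0, T_1, T_2, H, S_i, L_i,\dots$ are quantified and how acceptance is expressed. Since $\mathbf{M}$ now runs with \emph{universal} nondeterminism and accepts iff \emph{every} computation path is accepting, I would quantify all of these relations \emph{universally} and assert the implication ``valid computation $\rightarrow$ accepting'' in place of the conjunction ``valid computation $\wedge$ accepting'' used for $\Sigma^{\mathit{plog}}_1$. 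The consistency constraints of~(\ref{eq-2})--(\ref{eq-9}) pin down the whole trace uniquely from the guessed choice relation $C$, so this implication states precisely that the unique computation selected by each choice sequence reaches the accepting state; ranging over all $C$ then captures universal acceptance. The resulting sentence has a prefix of universal second-order quantifiers followed by the existential-first-order matrix, hence lies in $\Pi^{\mathit{plog}}_1$.

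For $\Pi^{\mathit{plog}}_1 \subseteq \tilde{\Pi}^{\mathit{plog}}_1$ (the formula-to-machine direction, dual to Part~a) I would take $\Phi = \forall X_1 \cdots \forall X_m\,\psi$ and build a random-access alternating machine that starts in a universal state, universally branches to write down a valuation of each $X_i$ (each a relation of $O(\log^{k'} n)$ tuples, hence $O(\log^{k'+1} n)$ bits), and then evaluates the second-order-quantifier-free matrix $\psi$ exactly as in Cases~(1)--(2) of Part~a of Theorem~\ref{pedsoplog}: restricted first-order universals $\forall \bar{x}(X(\bar{x}) \rightarrow \cdots)$ are checked by a \emph{deterministic} loop over the polylogarithmically many tuples of the already-fixed relation, and first-order existentials $\exists x$ by guessing $\lceil \log n\rceil$ bits. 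One then verifies that the machine respects the time bound and the alternation budget defining $\tilde{\Pi}^{\mathit{plog}}_1$.

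The hard part is the interaction between the universal second-order prefix and the first-order existential quantifiers inside $\psi$, which is exactly where the absence of negation closure makes itself felt. On the one hand, the restricted first-order universal quantifier is deterministic (a polylogarithmic loop) and therefore contributes \emph{no} alternation, which is what keeps the construction within a single universal layer; this must be argued carefully, recalling (Remark~\ref{r1}) that unbounded first-order universal quantification is available only through a second-order encoding, so that $\psi$ itself contains only restricted first-order universals. On the other hand, any attempt to instead derive the theorem by negating a $\Sigma^{\mathit{plog}}_1$ description would convert the restricted first-order universals of $\psi$ into harmless first-order existentials but would convert its first-order existentials into \emph{unbounded} universal first-order quantifiers, which are neither directly available in the matrix nor decidable within $\npolylog$. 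Showing that the direct, duality-preserving construction above stays within the prescribed number of alternations---rather than leaking an extra existential layer and landing in $\tilde{\Pi}^{\mathit{plog}}_2$---is the crux, and is the place where the proof must be written with the greatest care.
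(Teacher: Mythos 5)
Your proposal follows essentially the same route as the paper's own (sketched) proof: both directions are obtained by dualizing the two parts of Theorem~\ref{pedsoplog}. On the machine-to-formula direction you are in fact more careful than the paper's wording: the appendix sketch says only that ``all second-order existential quantifiers have to be turned into universal ones,'' which taken literally would yield $\forall \bar{X}(\text{valid} \wedge \text{accepting})$ and be wrong; your replacement of the conjunction by the implication ``valid $\rightarrow$ accepting'' is the intended fix. (One still has to check that the negated validity clauses of (\ref{eq-2})--(\ref{eq-9}) remain inside the restricted syntax; their first-order existentials negate into universals that are bounded by $I$ or $I'$, so this goes through, but it deserves a sentence.)

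The genuine gap is exactly the one you name as ``the crux'' and then leave open, and it is not closed by the paper's sketch either. In the direction $\Pi^{\mathit{plog}}_1 \subseteq \tilde{\Pi}^{\mathit{plog}}_1$ you evaluate the first-order existentials of the matrix ``by guessing $\lceil \log n \rceil$ bits.'' Guessing is an existential branching step; performed after the universal second-order block it produces a $\forall\exists$ alternation pattern, i.e.\ a machine witnessing membership in $\tilde{\Pi}^{\mathit{plog}}_2$, not in $\tilde{\Pi}^{\mathit{plog}}_1$ (which, under the paper's convention that $\tilde{\Sigma}^{\mathit{plog}}_1 = \npolylog$, consists of purely universal machines). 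Nor can the existential be folded into the universal phase deterministically, since a search over the full domain costs linear time. The difficulty is not merely presentational: the sentence $\forall Y^{1,\log^0}\, \exists x\, R(x)$ is syntactically in $\Pi^{\mathit{plog}}_1$ and defines $\{\mathbf{A} : R^{\mathbf{A}} \neq \emptyset\}$, whose complement $\{R^{\mathbf{A}} = \emptyset\}$ cannot be accepted by a nondeterministic polylog-time machine (an accepting run reads only polylogarithmically many input bits, so flipping an unread bit preserves acceptance). A complete proof therefore needs either an argument that the matrix of a $\Pi^{\mathit{plog}}_1$ sentence can be decided without existential branching, or a convention on counting alternations under which a trailing existential phase is permitted in $\tilde{\Pi}^{\mathit{plog}}_1$; the paper's assertion that ``all states are universal'' combined with a deferral to the Theorem~\ref{pedsoplog} argument (whose checks use existential guessing) does not supply either.
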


\begin{proof}[Proof (sketch)]
\ In order to show $\Pi^{\mathit{plog}}_1 \subseteq \tilde{\Pi}^{\mathit{plog}}_1$ we proceed in the same way as in the proof of Theorem \ref{pedsoplog}[Part a] with the only difference that all states are universal. Let $\phi = \forall X_1^{r_1,\log^{k_1}} \dots \exists X_m^{r_m,\log^{k_m}} \varphi$, where $\varphi$ is a first-order formula with the restrictions given in the definition of $\mathrm{SO}^{\mathit{plog}}$. We first determine all possible values for the second-order variables $X_i^{r_i,\log^{k_i}}$. Any combination of such values determines a branch in the computation tree of \textbf{M}, and for each such branch the machine has to checks $\varphi$. The argument that these checks can be done in poly-logarithmic time is the same as in the proof of Theorem \ref{pedsoplog}. Then by definition of the complexity classes $\mathrm{ATIME}^{op}[\log^k n,m]$ and the definition of acceptance for alternating Turing machines the machine \textbf{M} evaluates $\phi$ in poly-logarithmic time.

In order to show the inverse, i.e. $\tilde{\Pi}^{\mathit{plog}}_1 \subseteq \Pi^{\mathit{plog}}_1$, we exploit that the given random access alternating Turing machine has only universal states and thus all branches in its computation tree must lead to an accepting state. Consequently, the same construction of a formula $\phi$ as in the proof of Theorem \ref{pedsoplog}[Part b] can be used with the only difference that all second-order existential quantifiers have to be turned into universal ones. Then the result follows in the same way as in the proof of Theorem \ref{pedsoplog}.
\end{proof}

\begin{theorem}[Theorem \ref{pedsoplog3}]
Over ordered structures with sucessor relation, $\mathrm{BIT}$ and constants for $\log n$, the minimum, second and maximum elements, $\Sigma^{\mathit{plog}}_m$ captures $\tilde{\Sigma}^{\mathit{plog}}_m$ and $\Pi^{\mathit{plog}}_m$ captures $\tilde{\Pi}^{\mathit{plog}}_m$ for all $m \ge 1$.
\end{theorem}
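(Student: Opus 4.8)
The plan is to prove both equalities simultaneously by induction on $m$, taking Theorems~\ref{pedsoplog} and~\ref{pedsoplog2} as the base case $m=1$. The driving observation is that the prenex shape of the fragments mirrors the block structure of the machines: a $\Sigma^{\mathit{plog}}_{m+1}$-sentence decomposes as $\exists \bar{X}_1 \varphi$ with $\varphi \in \Pi^{\mathit{plog}}_m$, while a random-access alternating Turing machine witnessing $\tilde{\Sigma}^{\mathit{plog}}_{m+1}$ decomposes into an initial existential phase of $O(\log^k n)$ steps followed, after its first alternation, by a computation of type $\tilde{\Pi}^{\mathit{plog}}_m$; dually for $\Pi^{\mathit{plog}}_{m+1}$ and $\tilde{\Pi}^{\mathit{plog}}_{m+1}$. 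To make the induction carry the information that crosses each phase boundary, I would first restate the correspondence in \emph{relativised} form, allowing formulae with free second-order variables and machines that receive, on a dedicated read-only work tape, a poly-logarithmic-size relation encoding a guessed object or an intermediate configuration.

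For the inclusion $\Sigma^{\mathit{plog}}_{m+1} \subseteq \tilde{\Sigma}^{\mathit{plog}}_{m+1}$ I would write the sentence as $\exists \bar{X}_1 \varphi$; the machine starts existentially and guesses a valuation for $\bar{X}_1$, which, since each variable has bounded exponent, is a relation of poly-logarithmic size guessable in $O(\log^{k'} n)$ steps exactly as in Theorem~\ref{pedsoplog}[Part a]. It stores $\bar{X}_1$ on the extra tape and then runs the $\tilde{\Pi}^{\mathit{plog}}_m$-machine for $\varphi$ supplied by the induction hypothesis; concatenating an existential phase with an $m$-block computation beginning universally yields an $(m+1)$-block computation beginning existentially. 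The reverse inclusion $\tilde{\Sigma}^{\mathit{plog}}_{m+1} \subseteq \Sigma^{\mathit{plog}}_{m+1}$ proceeds symmetrically: the binary choices of the first existential phase are recorded in a poly-logarithmic-size relation guessed by an outer block $\exists \bar{X}_1$, and the configuration $C$ reached at the first alternation (state, head positions, and the at most $\log^k n$ written cells) is itself poly-logarithmic in size and is a first-order function of the input and of $\bar{X}_1$. By the relativised hypothesis the residual $\tilde{\Pi}^{\mathit{plog}}_m$-computation starting from $C$ is defined by a $\Pi^{\mathit{plog}}_m$-formula $\varphi$ with $C$ (hence $\bar{X}_1$) free, and the required sentence is $\exists \bar{X}_1\big(\mathrm{VALID}_1 \wedge \varphi\big)$, where $\mathrm{VALID}_1$ asserts, in the style of Theorem~\ref{pedsoplog}[Part b], that the recorded choices drive a legal first-phase run ending in $C$. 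The two $\Pi$-cases are obtained by exchanging $\exists$ and $\forall$ throughout, so the joint induction closes.

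The step I expect to be the main obstacle is the bookkeeping at the seam between consecutive phases. First, the induction must transport the guessed relation and the boundary configuration across the $\Sigma$/$\Pi$ boundary, which is why the hypothesis has to be relativised to free second-order variables rather than stated for sentences only; here one must verify that the first-order definability of $C$ from $\bar{X}_1$ and the input, via the arithmetic macros of Section~\ref{sec:examples}, introduces no spurious alternations. Second, and most delicate, is guaranteeing that the innermost first-order matrix $\psi$ does not inflate the block count beyond $m$: its bounded universal quantifiers $\forall \bar{x}(X(\bar{x}) \to \cdots)$ must be evaluated as deterministic loops over the at most $\log^{k} n$ tuples of $X$, exactly the device of Theorem~\ref{pedsoplog}[Part a, Case (2)], so that only genuine second-order alternations are counted, while the existential first-order quantifiers are absorbed into the nearest existential phase. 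Checking that this absorption is always available after the rearrangement of Lemma~\ref{lem-snf}, and that the accumulating exponents keep the running time poly-logarithmic across all $m$ levels, is the crux; the corollary $\mathrm{PLH} = \mathrm{SO}^{\mathit{plog}}$ then follows simply by taking the union over $m$.
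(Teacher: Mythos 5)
Your proposal follows essentially the same route as the paper's own (sketched) proof: induction on $m$ with Theorems~\ref{pedsoplog} and~\ref{pedsoplog2} as the base case, peeling off the leading quantifier block to match the machine's first phase in one direction, and characterising the residual submachine after the first alternation by a formula one level down in the other. Your explicit relativisation of the induction hypothesis to free second-order variables and to machines started from a stored boundary configuration is a sound way of making precise a step the paper's sketch leaves implicit, but it is a refinement of the same argument rather than a different one.
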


\begin{proof}[Proof (sketch)]
\ We proceed by induction, where the grounding cases for $m=1$ are given by Theorems \ref{pedsoplog} and \ref{pedsoplog2}. For the inclusions $\Sigma^{\mathit{plog}}_m \subseteq \tilde{\Sigma}^{\mathit{plog}}_m$ and $\Pi^{\mathit{plog}}_m \subseteq \tilde{\Pi}^{\mathit{plog}}_m$ we have to guess (or take all) values for the second-order variables in the leading block of existential (or universal, respectively) quantifiers, which is done with existential (or universal, respectively) states. For the checking of the subformula in $\Pi^{\mathit{plog}}_{m-1}$ (or in $\Sigma^{\mathit{plog}}_{m-1}$, respectively) we have to switch to a universal (existential) state and apply the induction hypothesis for $m-1$.

Conversely, we consider the computation tree of the given alternating Turing machine \textbf{M} and construct a formulae as in the proofs of Theorems \ref{pedsoplog} and \ref{pedsoplog2} exploiting that for each switch of state from existential to universal (or the other way round) the corresponding submachine can by induction be characterised by a formula in $\Pi^{\mathit{plog}}_{m-1}$ or $\Sigma^{\mathit{plog}}_{m-1}$, respectively.
\end{proof}

\end{appendix}

\end{document}